\documentclass[runningheads]{llncs}
\newif\iflong
\newif\ifshort

\longtrue

\iflong
\else
\shorttrue
\fi
\usepackage[T1]{fontenc}
\usepackage{graphicx,thm-restate}
\usepackage{xspace}
\usepackage[hidelinks]{hyperref}
\newcommand{\fcc}{\textsc{\textup{Fair Correlation Clustering}}\xspace}
\newcommand{\fccshort}{\textsc{\textup{FCC}}\xspace}

\newcommand{\tc}{\tilde{c}}

\newcommand{\set}[1]{{\{#1\}}}

\usepackage{amsfonts,amssymb,mathtools}
\usepackage{complexity}
\usepackage{algorithm,environ,tabularx,mdframed,xpatch,xspace,todonotes}
\usepackage{algpseudocode}
\usepackage[dvipsnames]{xcolor}
\usepackage{tikz}
\usetikzlibrary{calc,arrows,shapes,shapes.geometric,matrix,decorations.pathreplacing,positioning,patterns,backgrounds,fit,topaths,mindmap,shapes.callouts,decorations.text,chains,intersections,hobby}
\usetikzlibrary{svg.path}

\makeatletter
\renewcommand{\@Opargbegintheorem}[4]{%
  #4\trivlist\item[\hskip\labelsep{#3#2\@thmcounterend}]}
\makeatother

\newcommand{\clos}{\ensuremath{\text{clos}}}
\newcommand{\bigoh}{\mathcal{O}}

  \spnewtheorem{observation}{Observation}{\bfseries}{}

\newcommand{\comp}{\mathsf{comp}}
\newcommand{\sol}{\mathsf{cluster}}
\newcommand{\Comps}{\mathsf{Comp}}
\newcommand{\Sols}{\mathsf{Clusters}}
\newcommand{\Cuts}{\mathsf{Cuts}}
\newcommand{\cut}{\mathsf{cut}}
\newcommand{\inbound}{\zeta}
\newcommand{\solbound}{\gamma}

\newcommand{\N}{\mathbb{N}}

\newcommand{\yesinstance}{\textup{YES}-instance\xspace}

\DeclareMathOperator{\vcn}{vcn}
\DeclareMathOperator{\td}{td}
\DeclareMathOperator{\tw}{tw}
\DeclareMathOperator{\counter}{count}
\DeclareMathOperator{\cost}{cost}
\DeclareMathOperator{\open}{open}
\DeclareMathOperator{\current}{current}
\DeclareMathOperator{\size}{size}

\makeatletter
\newcommand{\problemtitle}[1]{\gdef\@problemtitle{#1}}
\newcommand{\probleminput}[1]{\gdef\@probleminput{#1}}
\newcommand{\problemtask}[1]{\gdef\@problemtask{#1}}

\NewEnviron{myproblem}{
\problemtitle{}\probleminput{}\problemtask{}
\BODY
\noindent
\begin{tabularx}{\textwidth}{
	 						l X c}
	\multicolumn{2}{
						l}{\@problemtitle} \\
	\textbf{Input:} & \@probleminput \\
	\textbf{Question:} & \@problemtask
\end{tabularx}
}
\makeatother

\makeatletter
\xpatchcmd{\endmdframed}
  {\aftergroup\endmdf@trivlist\color@endgroup}
  {\endmdf@trivlist\color@endgroup\@doendpe}
  {}{}
\makeatother

\surroundwithmdframed[linecolor=black,linewidth=1pt,skipabove=1mm,skipbelow=1mm,leftmargin=0mm,rightmargin=0mm,innerleftmargin=1mm,innerrightmargin=1mm,innertopmargin=1mm,innerbottommargin=1mm]{myproblem}
\surroundwithmdframed[linecolor=black,linewidth=1pt,skipabove=1mm,skipbelow=1mm,leftmargin=0mm,rightmargin=0mm,innerleftmargin=1mm,innerrightmargin=1mm,innertopmargin=1mm,innerbottommargin=1mm]{problem_compact}

\tikzset{matrixstyle/.style={matrix of nodes, ampersand replacement=\&, 
    row sep=-1.5ex, column sep=2pt
  }}

\tikzstyle{bulletnode}=[minimum size = 2ex, inner sep=-1pt, draw, circle]
\newcommand{\convexpath}[2]{
  [   
  create hullcoords/.code={
    \global\edef\namelist{#1}
    \foreach [count=\counter] \nodename in \namelist {
      \global\edef\numberofnodes{\counter}
      \coordinate (hullcoord\counter) at (\nodename);
    }
    \coordinate (hullcoord0) at (hullcoord\numberofnodes);
    \pgfmathtruncatemacro\lastnumber{\numberofnodes+1}
    \coordinate (hullcoord\lastnumber) at (hullcoord1);
  },
  create hullcoords
  ]
  ($(hullcoord1)!#2!-90:(hullcoord0)$)
  \foreach [
  evaluate=\currentnode as \previousnode using \currentnode-1,
  evaluate=\currentnode as \nextnode using \currentnode+1
  ] \currentnode in {1,...,\numberofnodes} {
    let \p1 = ($(hullcoord\currentnode) - (hullcoord\previousnode)$),
    \n1 = {atan2(\y1,\x1) + 90},
    \p2 = ($(hullcoord\nextnode) - (hullcoord\currentnode)$),
    \n2 = {atan2(\y2,\x2) + 90},
    \n{delta} = {Mod(\n2-\n1,360) - 360}
    in 
    {arc [start angle=\n1, delta angle=\n{delta}, radius=#2]}
    -- ($(hullcoord\nextnode)!#2!-90:(hullcoord\currentnode)$) 
  }
}

\usepackage{amsmath}
\usepackage[capitalise]{cleveref}

\begin{document}

\title{Fair Correlation Clustering\\ Meets Graph Parameters }

\author{Johannes Blaha\inst{1} \and
Robert Ganian \inst{1} \and
Katharina Gillig\inst{2} \and\\
Jonathan S. Højlev\inst{3} \and
Simon Wietheger\inst{1}}

\authorrunning{Blaha, Ganian, Gillig, Højlev, Wietheger}
\institute{TU Wien, Vienna, Austria \email{e11930322@student.tuwien.ac.at, rganian@gmail.com, swietheger@ac.tuwien.ac.at} \and
Universität Bonn, Bonn, Germany \email{s3kagill@uni-bonn.de} \and 
Technical University of Denmark, Copenhagen, Denmark 
\email{s194684@dtu.dk}}
\maketitle              
\begin{abstract}
We study the generalization of \textsc{Correlation Clustering} which incorporates fairness constraints via the notion of fairlets. The corresponding \textsc{Fair Correlation Clustering} problem has been studied from several perspectives to date, but has so far lacked a detailed analysis from the parameterized complexity paradigm. We close this gap by providing tractability results for the problem under a variety of structural graph parameterizations, including treewidth, treedepth and the vertex cover number; our results lie at the very edge of tractability given the known \NP-hardness of the problem on severely restricted inputs.
\end{abstract}
\section{Introduction}
\label{sec:intro}
Clustering is a fundamental task that arises in a wide range of applications, where the goal is to partition a set of objects based on pairwise similarity or dissimilarity information. On graphs, the task is typically formalized via the 
\textsc{Correlation Clustering} problem~\cite{BansalBC04,Bonchi2022,CohenAddadLN22}: given a graph $G$ and an integer~$k$, can we modify (i.e., add or remove) at most $k$ edges of $G$ to obtain collection of pairwise independent cliques? This problem is sometimes also referred to as \textsc{Cluster Editing}~\cite{FominKPPV14,KomusiewiczU11,KomusiewiczU12}, and is known to be \NP-hard even if we are promised that the optimal solution contains two clusters~\cite{ClusterEditingNPComplete}. The number of modifications is often called the \emph{cost} of a correlation clustering.

While \textsc{Correlation Clustering} can be seen as a well-suited model for finding an optimal clustering of unlabeled elements (i.e., vertices), in many scenarios one needs to ensure that the clustering respects additional constraints arising from the specific context it is applied in. Fairness considerations are a prime example of this: on labeled data, a solution to the base \textsc{Correlation Clustering} problem can yield clusters where the presence of the labels is disproportional when compared to the label distribution on the input. In view of these considerations, Chierichetti, Kumar, Lattanzi and Vassilvitskii initiated the study of fair clustering problems via the lens of \emph{fairlets}~\cite{Chierichetti0LV17}. When viewing the individual vertex labels as colors, a fairlet is a minimum set of vertices which exhibit the same color ratio as the whole graph, and we call a clustering \emph{fair} if each cluster can be partitioned into fairlets\footnote{We remark that this strict formalization of fair clustering can easily be relaxed to allow some ``slack'' in terms of the color ratios. Similarly to previous theoretical studies of the notion, here we focus on the most natural basic variant.}. 
Since its introduction, the complexity of fairlet-based clustering has been studied in a broad range of models~\cite{KleindessnerAM19,AhmadianE0M20,Casel0SW23,BandyapadhyayFS24}; here, we focus on the setting of correlation clustering and formalize our problem of interest below (see also Figure~\ref{fig:FairCorrelationClusteringExample}).

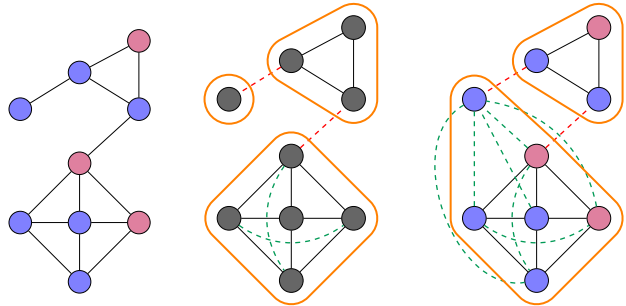
\begin{figure}[th]
    \centering
    \scalebox{0.60}{
    \begin{tikzpicture}[
        blue/.style = {draw, circle, fill=blue!50, minimum size=5mm},
        purple/.style = {draw, circle, fill=purple!50, minimum size=5mm}
        ]
        \node[blue](a){};
        \node[blue, left=8mm of a](b){};
        \node[blue, below=8mm of a](c){};
        \node[purple, right=8mm of a](d){};
        \node[purple, above=8mm of a](e){};
        \node[blue, above=20mm of d](f){};
        \node[blue, above=15mm of e](h){};
        \node[purple, above=10mm of f](g){};
        \node[blue, above=20mm of b](i){};
        \draw[-] (b) edge (a)
                 (c) edge (a)
                 (e) edge (a)
                 (d) edge (a)
                 (b) edge (c)
                 (c) edge (d)
                 (d) edge (e)
                 (e) edge (f)
                 (f) edge (g)
                 (g) edge (h)
                 (h) edge (i)
                 (f) edge (h)
                 (e) edge (b); 
        \draw[white, very thick] \convexpath{i,h,g,d,c,b}{0.5cm};    
    
        \end{tikzpicture}}
        \hskip 0.35cm
    \scalebox{0.63}{
    \begin{tikzpicture}[
        main/.style = {draw, circle, fill=black!60, minimum size=5mm}
        ]
            \node[main](a){};
            \node[main, left=8mm of a](b){};
            \node[main, below=8mm of a](c){};
            \node[main, right=8mm of a](d){};
            \node[main, above=8mm of a](e){};
            \node[main, above=20mm of d](f){};
            \node[main, above=15mm of e](h){};
            \node[main, above=10mm of f](g){};
            \node[main, above=20mm of b](i){};
            \draw[-] (b) edge (a)
                     (c) edge (a)
                     (e) edge (a)
                     (d) edge (a)
                     (b) edge (c)
                     (c) edge (d)
                     (d) edge (e)
                     (f) edge (g)
                     (g) edge (h)
                     (f) edge (h)
                     (e) edge (b);   
            \draw[-,ForestGreen, thick,dashed, bend right=35]   (e) edge (c)
                                                    (b) edge (d);
            \draw[-, red, thick, dashed]    (h) edge (i)
                                            (e) edge (f);                
            \draw[orange, very thick] \convexpath{b,e,d,c}{0.5cm};    
            \draw[orange, very thick] \convexpath{h,g,f}{0.5cm}; 
            \node[draw=orange,very thick,circle,fit=(i)]{};
        \end{tikzpicture}
        }
        \scalebox{0.63}{
        \begin{tikzpicture}[
        blue/.style = {draw, circle, fill=blue!50, minimum size=5mm},
        purple/.style = {draw, circle, fill=purple!50, minimum size=5mm}
        ]
        \node[blue](a){};
        \node[blue, left=8mm of a](b){};
        \node[blue, below=8mm of a](c){};
        \node[purple, right=8mm of a](d){};
        \node[purple, above=8mm of a](e){};
        \node[blue, above=20mm of d](f){};
        \node[blue, above=15mm of e](h){};
        \node[purple, above=10mm of f](g){};
        \node[blue, above=20mm of b](i){};
        \draw[-] (b) edge (a)
                 (c) edge (a)
                 (e) edge (a)
                 (d) edge (a)
                 (b) edge (c)
                 (c) edge (d)
                 (d) edge (e)
                 (f) edge (g)
                 (g) edge (h)
                 (f) edge (h)
                 (e) edge (b);   
        \draw[-, ForestGreen, thick, dashed, bend right=35] 
            (e) edge (c)
            (b) edge (d)
            (d) edge (i);
        \draw[-, ForestGreen, thick,dashed, bend right=80] 
            (i) edge (c);
        \draw[-, ForestGreen, dashed, thick] 
            (a) edge (i)
            (b) edge (i)
            (e) edge (i); 
        \draw[-, red, thick, dashed]
            (h) edge (i)
            (e) edge (f);
        \draw[orange, very thick] \convexpath{b,i,e,d,c}{0.5cm};    
        \draw[orange, very thick] \convexpath{h,g,f}{0.5cm}; 
        \end{tikzpicture}
        }\\
    
    \caption{\emph{Left}: An input graph; here the fairlet is $\{$red, blue, blue$\}$. \emph{Middle:} A ``fairness-oblivious'' minimum-cost correlation clustering (cost 4). \emph{Right}: A minimum-cost fair correlation clustering (cost 9).}
    \label{fig:FairCorrelationClusteringExample}
    \raggedright
    \medskip
\end{figure}

\begin{myproblem}
\problemtitle{\textsc{Fair Correlation Clustering} (FCC)}
\probleminput{An undirected $n$-vertex graph $G$, a budget $B  \in \mathbb{N}$ and a $\kappa$-coloring $\chi:V(G)\rightarrow \{1,\dots,\kappa\}$ for an integer $\kappa \le n$.}
\problemtask{Is there a fair correlation clustering of cost at most $B$?}
\end{myproblem}

As a generalization of \textsc{Correlation Clustering}, \textsc{FCC} is \NP-hard~\cite{AhmadianE0M20}. However, unlike the former problem, \textsc{FCC} was shown to remain \NP-hard even when restricted to very simple input graphs, such as graphs of diameter $2$ and $2$-colored trees of diameter $4$~\cite{Casel0SW23}. 
Most foundational research on \textsc{FCC} to date has focused on overcoming its inherent intractability via approximation. Ahmadian, Epasto, Kumar, and Mahdian~\cite{AhmadianE0M20} reduced the problem to the ``fairlet decomposition'' problem with a standard median cost function to achieve a constant-factor approximation, while Ahmadi, Galhotra, Saha, and Roy~\cite{abs200203508} 
focused on the number of colors and the relative amount of vertices in the most common color and obtained an  approximation, which is quadratic in both parameters.

In this work, we take a different approach and ask whether it is possible to exploit the structural properties of the input graph $G$ to yield exact algorithms for \textsc{FCC}. The \emph{parameterized complexity} paradigm (see Section~\ref{sec:prelims}) offers the perfect tools for answering this question, and we obtain a detailed overview of how the problem's complexity depends on the structure of $G$.

\smallskip
\noindent \textbf{Contributions.}
The aforementioned \NP-hardness of \textsc{FCC} on $2$-colored trees rules out fixed-parameter and even \XP-algorithms under the vast majority of graph parameterizations, including not only the classical \emph{treewidth}~\cite{RobertsonS84} but also the closely related \emph{pathwidth}~\cite{RobertsonS83}, \emph{treedepth}~\cite{sparsity} and even the \emph{feedback edge number}~\cite{BalabanGR24,GanianMOPR24}. Indeed, all of these measure how ``structurally similar'' the input graph is to a tree (or a special class thereof), and all attain small constant values on trees of diameter~$4$. In spite of these lower bounds, as our first contribution we show that one can still achieve fixed-parameter tractability for \textsc{FCC} by using a suitable structural parameter:

\begin{restatable}{theorem}{thmvcn}
\label{thm:vcn}
\fccshort is in \FPT\ w.r.t.\ the vertex cover number of $G$.
\end{restatable}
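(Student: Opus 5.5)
Let $C$ be a minimum vertex cover of $G$ with $|C| = k$; it can be computed in $2^k n^{\bigoh(1)}$ time. The independent set $I = V(G)\setminus C$ splits into at most $2^k\cdot\kappa$ \emph{types}: two vertices of $I$ have the same type if they have the same neighborhood in $C$ and the same color. The number of colors is not bounded by $k$, so we must treat colors with care. If some color class occurs in a fairlet with multiplicity $m_i$ and the fairlet has size $f=\sum_i m_i$, then every fair cluster has size a multiple of $f$. The plan has three stages. First, guess how $C$ is partitioned among clusters, which gives at most $k^k$ choices. Second, decide how the vertices of $I$ are distributed among the (at most $k$) clusters that meet $C$ and the clusters that lie entirely inside $I$. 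Third, optimize this distribution with an ILP whose number of variables is bounded in $k$, and solve it with Lenstra's algorithm.

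\textbf{Handling clusters inside $I$.} A cluster $S\subseteq I$ is an independent set, so it costs $\binom{|S|}{2}$ insertions plus the cut edges to $C$. The cut edges are charged to the vertices anyway, since every edge from $v\in I$ to $C$ that leaves $v$'s cluster is deleted. By convexity of $\binom{\cdot}{2}$, an optimal solution may split every such cluster into single fairlets. So the vertices of $I$ not attached to a $C$-cluster are partitioned into fairlets, each costing $\binom{f}{2}$ plus deleted edges, and the question becomes one of feasibility with respect to color counts.

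\textbf{Bounding colors.} This is the main obstacle: with $\kappa$ unbounded, per-color counting variables blow up. I would proceed as follows. A vertex of $I$ with $N(v)\cap C=\emptyset$ and a vertex with neighbors in $C$ behave differently only through their neighborhoods. The colors that appear among vertices with a nonempty neighborhood in $C$ are arbitrary, but at most $k$ clusters touch $C$. I would therefore guess, for each $C$-cluster, the number of fairlets it contains, and argue by an exchange argument that an optimal solution places in the $C$-clusters only vertices adjacent to that cluster's part of $C$, except for a bounded number of ``filler'' vertices. Since there are at most $k$ such clusters, each with a bounded amount of fillers, the fillers can be handled per color class collectively. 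Colors then need to be distinguished only through the at most $2^k$ neighborhood classes. Concretely, for each $C$-cluster $j$ and each neighborhood type $T\subseteq C$, I introduce a variable $x_{j,T,i}$ for each color $i$. To keep the variable count bounded, I exploit that colors with identical multiplicity profiles $(m_i, \text{counts per neighborhood type})$ are interchangeable. The number of such profiles is still unbounded, so as a fallback I would bound the fairlet size: if $f > 2k$ or so, a cluster meeting $C$ has a large independent part and a simple greedy formula applies.

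\textbf{The ILP.} The objective is quadratic, since $\binom{\text{size}}{2}$ insertions arise inside each $C$-cluster. However, the sizes are multiples of $f$, and I guess the number of fairlets $t_j$ per $C$-cluster only when it is small. Otherwise I use parameterized convex IP (the quadratic separable objective is solvable in FPT time by the known results on convex IPs with few variables). The constraints are that color counts per cluster equal $t_j m_i$, that type counts are respected, and that the leftover vertices admit a fairlet decomposition, which is automatic once the color counts are proportional. The cost is the sum of inserted edges inside clusters plus deleted edges across clusters, all of which is linear or separable-convex in the variables. Correctness follows from the guesses and the exchange arguments, and the running time is $f(k)\cdot n^{\bigoh(1)}$.
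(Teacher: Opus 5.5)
Your first stage (guessing how $C$ is partitioned) and your treatment of clusters inside $I$ (split them into single fairlets by convexity) agree with the paper. The remaining stages have a genuine gap, located exactly at the obstacle you flag.

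\textbf{The gap.} You need an integer program with $f(k)$ variables, but your variables $x_{j,T,i}$ are indexed by colors, and with $\kappa$ unbounded none of your proposed repairs bounds them.
\begin{itemize}
\item The claim that an optimal solution puts only boundedly many ``fillers'' into each cluster meeting $C$ is false. Take $k=1$, a cover vertex of degree $3$, and $n/2$ colors each occurring twice, so the fairlet has size $n/2$. The cluster of the cover vertex then has at least $n/2$ vertices, almost all of them fillers.
\item Grouping colors by profiles does not help, as you note yourself.
\item The fallback ``if $f>2k$ a simple greedy formula applies'' is unsupported. For large $f$ there can be unboundedly many colors. Deciding which vertices of each color fill which of the up to $k$ clusters meeting $C$ is a genuine assignment problem, because an $I$-vertex adjacent to several parts of $C$ is contested between clusters.
\end{itemize}
For $f\le 2k$ there is no difficulty at all: every color occurring in $G$ occurs in the fairlet, so $\kappa\le f$. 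The hard case is exactly the one you leave open. You also leave cluster sizes unbounded and defer to convex IP, which again presupposes few variables.

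\textbf{The missing ideas.} There are two.
\begin{enumerate}
\item \emph{A bound on cluster sizes.} Some solution has all clusters of size at most $\max(24\tw,\tilde{c})\le\max(24k,\tilde{c})$. Splitting a larger fair cluster $Z$ into two fair halves cuts at most $\tw\cdot|Z|$ edges but removes more non-edges than that. So, together with the partition of $C$, you can guess the size of each cluster meeting $C$ from at most $24k$ multiples of $\tilde{c}$, giving $k^{\bigoh(k)}$ branches.
\item \emph{A matching step.} Once the partition and sizes are fixed, the number $a$ of intra-cluster pairs is fixed, because clusters inside $I$ have size exactly $\tilde{c}$. Since the cost equals $|E(G)|+a-2b$, you only need to maximize the number $b$ of intra-cluster edges. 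Each pre-cluster now has a fixed demand of slots per color. Filling them optimally is a maximum-weight bipartite matching that saturates all slots. A slot of color $i$ is joined to every $v\in I$ of color $i$, with weight equal to the number of neighbors of $v$ in that pre-cluster's part of $C$ (there are no edges inside $I$).
\end{enumerate}
The matching runs in polynomial time regardless of $\kappa$. The unmatched vertices of $I$ form a fair set that can be cut into fairlets arbitrarily, so no ILP is needed.
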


Here, the \emph{vertex cover number} measures the size of a minimum vertex cover of $G$ and is a baseline parameter that has found a variety of applications on problems where decomposition-based parameters such as treewidth do not yield tractability~\cite{BalkoCGGHVW24,CyganLPPS14,FoucaudGK0IST24,KlemzS24}.
While the structure of graphs with a small vertex cover number might seem simple, establishing Theorem~\ref{thm:vcn} is non-trivial and relies on a combination of structural insights (among others establishing a bound on the cluster sizes), branching techniques and a matching subroutine.

For our next contributions, we revisit the setting of tree-like graphs and in particular the parameterization by treewidth, which can be seen as the most classical and fundamental graph parameter in the field. The aforementioned \NP-hardness reduction of Casel, Friedrich, Schirneck and Wietheger~\cite{Casel0SW23} rules out \XP-tractability w.r.t.\ treewidth even when combined with the number of colors; however, the construction relies on having large fairlets. On the other hand, in many natural scenarios one would expect the fairlet sizes to be small---a fairlet size of $1$ yields the classical \textsc{Correlation Clustering}, while a fairlet size of $2$ captures a balanced clustering with two colors. Hence, it is natural to ask whether one can achieve parameterized tractability for \textsc{FCC} when combining treewidth with restrictions on the fairlet size. We answer this affirmatively by establishing:

\begin{restatable}{theorem}{thmxptw}
\label{thm:xptw}
\fccshort is in $\XP$ w.r.t.\ the treewidth of $G$ plus the fairlet size.
\end{restatable}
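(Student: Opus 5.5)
The plan is a dynamic program over a nice tree decomposition of $G$ of width $w$. Such a decomposition can be computed in \FPT time (or approximated within a constant factor). Let $f$ be the fairlet size, and let $c_i$ be the number of vertices of color $i$ in a fairlet, so that $\sum_i c_i = f$. A cluster is fair exactly when, for some integer $m$, it contains $m c_i$ vertices of color $i$ for every $i$. In particular at most $f$ colors occur at all.

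The cost of a clustering splits into two parts. One part is the number of edges between different clusters. The other is the number of non-edges inside clusters, which equals $\sum_C \binom{|C|}{2}$ minus the number of edges inside clusters. Hence
\[
\cost = |E(G)| - 2\cdot|\{uv\in E(G): u,v \text{ in same cluster}\}| + \sum_C \binom{|C|}{2},
\]
and it suffices to keep track of the edges kept inside clusters and of the cluster sizes.

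At a bag $X_t$ the DP state records three things:
\begin{itemize}
\item a partition of $X_t$ into at most $w+1$ ``open'' clusters;
\item for each open cluster, its color-count vector restricted to the forgotten vertices seen so far, reduced modulo the fairlet;
\item enough information to finish the $\binom{|C|}{2}$ term.
\end{itemize}
A cluster with no vertex in the current bag is closed. It can never receive new vertices, because a later vertex in the same cluster would be a non-adjacent pair across the separator, and that is allowed. Here is the catch: in correlation clustering, clusters need not be connected in $G$, so a cluster can close and later ``reopen'' with new vertices. To handle this, I would first prove a structural lemma: some optimal solution has every cluster of size bounded by a function of $f$, or has every cluster connected. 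I would try an exchange argument. Split a cluster $C$ that is disconnected in $G$ into fair subclusters whenever its components can be grouped into fair unions. This never increases cost, since no edges are cut and the non-edges removed are at least zero. If it cannot be split, then each component is ``unbalanced'', and the unbalances must cancel in a vector space of dimension at most $f$. By a Steinitz-type argument, a bounded number of components already sum to a fairlet multiple.

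This yields a normal form: every cluster is a union of at most $g(f)$ connected components. I expect this normal form to be the main obstacle. A fully connected normal form is false in general, for example when each color alone induces a connected piece, so bounded-many-components is the target. With it, each cluster intersects each bag in at most $w+1$ vertices and is built from $g(f)$ connected pieces. The state then records, for each open cluster, which group it belongs to, together with its partial color count modulo the fairlet. The count is kept exactly up to the value needed, in $\{0,\dots,n\}^f$ for the open clusters, giving $n^{O(wf)}$ states.

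The remaining issue is the term $\binom{|C|}{2}$, which needs the exact sizes of clusters. Open clusters carry their exact current size, at most $n$, so the number of states is $n^{O((w+1)\cdot g(f))}$. Components of the same cluster that are not yet merged are carried as ``pending'' groups, at most $g(f)$ per cluster. At introduce nodes we branch on which open cluster or pending group the new vertex joins. At forget nodes we check that a finished cluster has a count vector that is a multiple of the fairlet, and we add $\binom{|C|}{2}$ when it closes. At join nodes we merge partitions that agree on the bag and add up the counts. Kept edges are counted when their second endpoint is introduced. Each state stores the minimum cost, and the overall running time is $n^{h(w,f)}$, which gives membership in \XP.
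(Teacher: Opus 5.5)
There is a genuine gap, and it sits exactly at the ``catch'' you point out. Your normal form does not close it. At a bag $B_t$ there can be unboundedly many clusters that contain forgotten vertices, contain no vertex of $B_t$, and still need vertices that are introduced later.

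For instance, take $n/2$ isolated red and $n/2$ isolated blue vertices, so the fairlet is one red and one blue. The optimal solutions are exactly the red--blue pairings. A nice tree decomposition that forgets all red vertices before introducing any blue one reaches an empty bag at which all $n/2$ clusters are half-built.

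Your state stores a partition of $B_t$ plus pending pieces attached to the $O(w)$ bag-intersecting clusters, which is where $n^{O((w+1)g(f))}$ comes from. So it cannot represent this situation. Bounding the number of connected pieces per cluster is beside the point: the obstacle is the number of clusters straddling the separator, not the number of pieces per cluster.

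The normal form is also not justified as argued. A Steinitz-type bound on minimal zero-sum sequences needs bounded entries. Consider a cluster made of one connected piece with $N$ more red than blue vertices plus $N$ isolated blue vertices. It has no fair proper sub-union of its components, for every $N$.

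Two ingredients are missing. First, you need a bound on cluster sizes in terms of the parameters. Lemma~\ref{lem:MaximumClustersizeTreewidth} gives a solution whose clusters have size at most $\max(24\tw,\tc)$. The reason is that splitting a larger fair cluster $Z$ into two fair halves $Z_1,Z_2$ removes at least $|Z_1||Z_2|-\tw|Z|$ non-edges but cuts at most $\tw|Z|$ edges.

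Second, the half-built clusters must be stored \emph{anonymously}, as a multiset over their types: guessed final size $d\tc$ together with a color-count vector. There are at most $(\max(24\tw,\tc))^{\kappa+1}$ types, and each multiplicity lies in $\{0,\dots,n\}$. This gives $n^{h(\tw,\tc)}$ records, and it is precisely why the result is \XP\ rather than \FPT. Without the size bound there would be $n^{\kappa}$ types, and the multiset could not be stored in \XP\ space.

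The transitions then follow:
\begin{itemize}
\item An introduced vertex may join such a multiset entry.
\item A forget node charges the non-edges from the forgotten vertex to the not-yet-seen members of its cluster, using the guessed final size. This also replaces your ``add $\binom{|C|}{2}$ when the cluster closes'', which is ill-defined once clusters can reopen.
\item A join node branches on how many entries of equal guessed size from the two children are merged. It subtracts the non-edges between the two merged parts, since these were charged on both sides.
\end{itemize}
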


\begin{restatable}{theorem}{thmfpttw}
\label{thm:fpttw}
\fccshort is in \FPT\ w.r.t.\ the treewidth of $G$ when restricted to instances with fairlet size at most $2$.
\end{restatable}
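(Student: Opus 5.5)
The plan is to show first that some optimal fair clustering uses only clusters of size $\bigoh(\tw(G))$. Once that holds, a standard dynamic program over a nice tree decomposition only needs bounded information per bag.

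With fairlet size at most $2$ there are two cases. If the fairlet size is $1$, the instance is plain \textsc{Correlation Clustering} and every clustering is fair. If it is $2$, there are exactly two colors of equal cardinality (a single color would give fairlet size $1$), so a cluster is fair if and only if it contains equally many vertices of each color.

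\textbf{Step 1: bounding cluster sizes.} Let $t=\tw(G)$. Every subgraph of $G$ on $s$ vertices has at most $t\cdot s$ edges, since graphs of treewidth $t$ are $t$-degenerate. Take an optimal fair clustering with a cluster $C$ of size $s$. We replace $C$ by a partition into fairlets.
\begin{itemize}
\item For fairlet size $1$, use singletons. For fairlet size $2$, pair each vertex of one color with a vertex of the other color arbitrarily; this is possible because $C$ is balanced.
\item Inside $C$, the old clustering pays at least $\binom{s}{2}-ts$ for non-edges.
\item The new clustering pays at most $s/2$ missing edges inside the pairs. It additionally cuts at most $ts$ edges of $G[C]$, and no other cost term changes.
\item Hence the replacement is strictly cheaper whenever $\binom{s}{2} > 2ts + s/2$, i.e.\ for $s > 4t+2$.
\end{itemize}
We may therefore assume all clusters have size at most $\solbound := 4t+2$. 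The replacement keeps fairness, so it suffices to search over fair clusterings with clusters of size at most $\solbound$.

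\textbf{Step 2: dynamic program.} Compute a nice tree decomposition of width $\bigoh(t)$ in \FPT\ time. For a node $x$ with bag $X_x$, a state consists of the following.
\begin{itemize}
\item A partition of $X_x$ into groups, meaning the restriction of the clusters that meet the bag.
\item For each group, the number of already-processed vertices of each color in its cluster (each number at most $\solbound$).
\item For each group, the number of edges of $G$ already seen inside the cluster (at most $\solbound^2$).
\end{itemize}
The value of a state is the minimum cost accrued so far. At \emph{introduce} nodes, a new vertex is placed in an existing group or a new one. At \emph{introduce-edge} nodes (or equivalently when processing edges incident to the new vertex), an edge whose endpoints lie in different groups adds $1$ to the cost, and an edge inside a group increments that group's edge counter. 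At \emph{join} nodes, the two children must agree on the partition of the bag. For each group we add the color counts and edge counts, subtracting the contribution of bag vertices and bag edges counted twice; the size bound is enforced. At \emph{forget} nodes, if the forgotten vertex was the last bag vertex of its group, the cluster is closed. Then all its vertices and edges have been seen, since any later vertex has no edge to forgotten vertices. We then check fairness (equal color counts) and add $\binom{s}{2}$ minus the edge count to the cost. The number of states per bag is $t^{\bigoh(t)}$, which gives an \FPT\ running time. Correctness follows from the usual separator argument, together with Step 1 guaranteeing that an optimum respecting the size bound exists.

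\textbf{Main obstacle.} The step I expect to require the most care is the join operation. It must correctly merge clusters that extend into both subtrees, and it must avoid double counting of edges and vertices in the bag. The key fact making it work is that a cluster meeting both subtrees must meet the bag, because the bag separates the two sides. This also guarantees that inter-cluster edges, which are exactly the edges cut by the clustering, are counted exactly once when they are introduced. The size-bound argument in Step 1 is the conceptual heart of the proof, but it is a short calculation.
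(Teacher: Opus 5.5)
Your Step~1 is correct; it is a sharper special case of Lemma~\ref{lem:MaximumClustersizeTreewidth}. Step~2, however, has a genuine gap.

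\textbf{The gap.} Your DP declares a cluster closed as soon as its last bag vertex is forgotten. This rests on the ``key fact'' that a cluster meeting both sides of a bag must meet the bag. Since the bag separates past from future vertices in $G$, that fact holds only for clusters inducing connected subgraphs of $G$. Nothing forces fair clusters to be connected: a future vertex with no edge to the forgotten ones can still join their cluster, paying for the non-edges.

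For $\tc=2$, disconnected clusters are often unavoidable. Let $G$ consist of one isolated red vertex $r$ and one isolated blue vertex $b$. The only fair clustering is $\{r,b\}$, of cost~$1$. Yet on the width-$0$ nice decomposition (leaf $\{r\}$, forget $r$, introduce $b$, forget $b$), your DP closes $\{r\}$, finds it unbalanced, and rejects.

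The size bound from Step~1 does not rescue this. It bounds each cluster, but not the number of ``open'' clusters that started in the past, avoid the current bag, and are completed in the future. With $m$ isolated red and $m$ isolated blue vertices processed reds-first, there are $m$ such clusters at once. Tracking them in general is exactly why the algorithm for Theorem~\ref{thm:xptw} stores a multiset of open clusters and is only XP.

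\textbf{The missing idea.} You need a structural lemma restricting which clusters can be open. The paper proves (Lemma~\ref{lem:SplitoffPairsOfVertices}) that for $\tc=2$ some optimal solution has every cluster of size greater than $2$ connected in $G$. The argument splits a balanced piece off a connected component of a disconnected cluster and checks that at least as many non-edges leave the cluster as edges are cut.

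After that, the only clusters that can bypass a bag are non-adjacent red--blue pairs, and these can be handled in aggregate:
\begin{itemize}
\item allow a vertex forgotten as a singleton group to become ``waiting'';
\item keep two counters of waiting red and waiting blue vertices, summed at join nodes, and charge $1$ per pair for the non-edge;
\item require the two counters to be equal at the root.
\end{itemize}
This multiplies the table size by only $\bigoh(n^2)$, so the running time stays $f(\tw)\cdot n^{\bigoh(1)}$. The paper does the same by allowing only open entries of size $2$ containing a single vertex.

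\textbf{The case $\tc=1$.} Here your DP is correct once you add the easy observation that splitting a disconnected cluster into its components strictly decreases the cost, so clusters may be assumed connected. The paper instead reduces this case to $\tc=2$.
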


We remark that in spite of the similarity of these two results, they are in fact incomparable and neither implies the other. Theorem~\ref{thm:fpttw} is interesting already for the case of fairlet size $1$, as the fixed-parameter tractability of \textsc{Correlation Clustering} (i.e., \textsc{Cluster Editing}) w.r.t.\ treewidth was not known and previously posed as an open question~\cite{Casel0SW23}. That being said, the main technical challenges arise when dealing with fairlet size $2$. To prove the theorem, we first establish that there exists a solution such that each cluster of size greater than~$2$ forms a connected subgraph in $G$. This structural insight then enables the use of treewidth-based dynamic programming, where connectivity is used to handle ``complex'' clusters while simple but potentially disconnected clusters are tracked in an aggregated manner.

For Theorem~\ref{thm:xptw}, the above approach unfortunately cannot be replicated---it is not difficult to construct instances where a solution must include clusters which are not connected in the original graph. This in turn requires the use of a more careful dynamic programming procedure which tracks information about sets of ``incomplete'' clusters, and the associated dynamic programming tables then require \XP-space w.r.t.\ both the fairlet size and treewidth.

Theorems~\ref{thm:xptw}~and~\ref{thm:fpttw} substantially expand the frontiers of tractability for \textsc{FCC} and the former generalizes the known \XP-algorithm for \textsc{FCC} w.r.t.\ the fairlet size on trees~\cite{Casel0SW23}. However, they also give rise to a natural follow-up question: is the problem fixed-parameter tractable w.r.t.\ treewidth plus the fairlet size? None of the known hardness results rules out such an outcome, but---as discussed above---the insights used to establish the former do not generalize to the setting with fairlet size $3$. While we leave this as a prominent open question for future work, as our final contribution we provide an affirmative answer when treewidth is replaced with \emph{treedepth}:

\begin{restatable}{theorem}{thmtd}
\label{thm:td}
\fccshort is in \FPT\ w.r.t.\ the treedepth of $G$ plus the fairlet size.
\end{restatable}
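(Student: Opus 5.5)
We plan to prove Theorem~\ref{thm:td} by a recursive dynamic program over a treedepth decomposition (an elimination forest $F$ of depth $d=\td(G)$), combined with a kernel-like bound on how many ``types'' of subtrees need to be distinguished. Let $f$ denote the fairlet size, so the number of colors is at most $f$ and each color $c$ has a fixed multiplicity $f_c$ per fairlet. We use two structural facts, the second of which we first establish as a lemma.

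\textbf{Fact 1.} A connected component of $G$ has at most $2^{d}$ vertices only in special cases; in general the depth bound means every root-to-leaf path has at most $d$ vertices.

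\textbf{Fact 2 (cluster-size lemma).} There is an optimal fair clustering in which every cluster has size bounded by a function $g(d,f)$. The intended argument is an exchange: split a large fair cluster $C$ into two fair subclusters $C_1,C_2$. This saves the $|C_1||C_2|$ non-edges that were added inside $C$, and it costs the edges of $G$ between $C_1$ and $C_2$. Because graphs of treedepth $d$ are $d$-degenerate, $|E(G[C])|\le d|C|$. So for $|C|$ much larger than $d\cdot f$, a balanced split into fairlet-unions $C_1,C_2$ with $|C_1|,|C_2|\ge |C|/3$ satisfies $|C_1||C_2| > d|C|$, and splitting strictly improves the cost. Hence we may assume $|C|\le g(d,f)=\bigoh(d f)$.

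\textbf{Dynamic program.} For each node $v$ of $F$, let $A_v$ be its ancestor set, with $|A_v|<d$. We process the subtree $T_v$ bottom-up. A partial solution assigns the vertices of $T_v$ to clusters. Clusters containing no ancestor vertex are ``closed'': they can never interact with vertices outside $T_v\cup A_v$, since they have no edges there, and they must already be fair. Clusters attached to ancestors are ``open''; a single further open cluster, with no ancestor in it, may also be carried along. The signature of a partial solution records:
\begin{itemize}
\item the partition of $A_v$ into open clusters;
\item for each open cluster, its current color vector, bounded via Fact~2 by $g$;
\item the number of unmatched open clusters without ancestors, grouped by color vector and by adjacency to $A_v$.
\end{itemize}
The cost of a partial solution is counted over all pairs inside $T_v$ and between $T_v$ and $A_v$.

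\textbf{Main obstacle: unbounded counts.} The last item has counts that are unbounded, so the table would only give an \XP bound. To get \FPT, we argue that clusters disjoint from $A_v$ but spanning several child subtrees incur edges to add between nonadjacent vertices (no $G$-edges exist between different child subtrees). Merging such pieces is therefore only worthwhile to fix fairness, and by a further exchange argument it suffices to keep each such count truncated at $g(d,f)$.

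\textbf{Alternative route.} If the truncation argument fails, the fallback is to combine children by type. Children of $v$ whose subtrees have the same signature-function are interchangeable, and the number of types is bounded by a function of $d$ and $f$. The combination step at $v$ can then be expressed as an ILP whose number of variables is bounded in $d$ and $f$, solved by Lenstra's algorithm. Either way, the table size and the per-node work depend only on $d$ and $f$, times a polynomial in $n$, which yields the claimed \FPT algorithm.
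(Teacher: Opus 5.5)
\textbf{There is a genuine gap: your proposal never handles clusters that contain no ancestor of $v$.}

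Your Fact~2 is essentially the paper's Lemma~\ref{lem:MaximumClustersizeTreewidth}, up to one detail. Splitting $C$ removes $|C_1||C_2|-e(C_1,C_2)$ non-edges and cuts $e(C_1,C_2)$ edges, so you need $|C_1||C_2|>2e(C_1,C_2)$. The bound you get is $\max(\bigoh(d),f)$. The problems are the following.
\begin{itemize}
\item Your claim that ancestor-free clusters are closed and already fair is false. Fairness can force non-adjacent pieces from different subtrees, or different trees of $F$, into one cluster. Take a blue star centre with $m$ red leaves, $m-1$ isolated blue vertices, and fairlet (one red, one blue). Every optimal solution pairs the centre with one leaf and each of the other $m-1$ red leaves with an isolated blue vertex. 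So at the root of the star there are $m-1$ unfinished pieces.
\item You then track such pieces with unbounded counts, but neither proposed fix is justified. Truncating the counts at $g(d,f)$ destroys exactly the information needed to check that all floating pieces in the whole graph can be assembled into fair clusters. In the example, that means checking there are exactly as many floating red pieces as isolated blue vertices. No exchange argument restores this.
\item The alternative route assumes that the number of child-subtree types is bounded in $d$ and $f$. That is the heart of the matter, and it fails without further work. A type, whether defined by a signature with unbounded counts or structurally by colour, ancestor adjacency and the multiset of child types, depends on unbounded child multiplicities. A per-node ILP would also still pass an unbounded-count signature to the parent.
\item Your Fact~1 is not usable. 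Components of bounded treedepth can be arbitrarily large (a star has treedepth $2$), and that is precisely the difficulty.
\end{itemize}

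\textbf{The missing idea is the paper's reduction rule, which makes multiplicities irrelevant before any global computation.} The paper's proof is not a dynamic program.
\begin{itemize}
\item Set $\gamma=\max(24\td,\tc)$. The rule processes $F$ bottom-up. In each layer it keeps only $\gamma$ vertices of each recursively defined type (colour, set of adjacent ancestors, multiset of child types).
\item For every other vertex it deletes the edges to its ancestors and decreases the budget by one per deleted edge.
\item This is safe because clusters have size at most $\gamma$. Suppose a solution keeps a deleted edge $\{v,w\}$ inside a cluster. Then some retained same-type neighbour $u$ of $v$ lies outside $v$'s cluster. Exchanging the isomorphic subtrees rooted at $u$ and $w$, which have identical adjacency to the ancestors, gives a fair clustering of no larger cost. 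Repeating this top-down removes all such edges.
\item After the reduction, every connected component has size bounded by a function of $\td+\tc$.
\item Only then is the cross-component assembly of pieces solved, once and globally, by the ILP of Lemma~\ref{lem:solve_for_small_components}. Its variables count how components of each isomorphism class are cut and how the resulting pieces combine into fair clusters of size at most $\gamma$.
\end{itemize}
Without such a reduction, or another device that genuinely handles the unbounded counts, your dynamic program yields only an \XP\ algorithm.
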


Treedepth is a parameter which measures distance to trees of bounded height; it has fundamental connections to the theory of graph sparsity~\cite{sparsity} and has found a variety of algorithmic applications as well~\cite{EibenGKOPW19,NederlofPSW23,BodlaenderGP23}. While Theorem~\ref{thm:fpttw} can be seen as a step towards settling the fixed-parameter tractability of \textsc{FCC} w.r.t.\ treewidth plus fairlet size, the technique underlying its proof is entirely different from the dynamic programming based approach used in Theorems~\ref{thm:xptw}~and~\ref{thm:fpttw}. In particular, we design a reduction rule which operates on a treedepth-based decomposition of the graph and whose exhaustive application results in a graph where each connected component has size bounded by the two parameters. For ``typical'' problems, such a reduction rule would immediately imply fixed-parameter tractability via \emph{kernelization}~\cite{CyganFKLMPPS15}---however, for \textsc{FCC} one cannot treat each connected component separately. Hence, after applying the aforementioned preprocessing the algorithm uses the bound on the size of connected components to solve the problem via an encoding as an Integer Linear Program where the number of variables and constraints is upper-bounded by a function of our parameters; at this point, one can invoke known techniques for ILP solving to obtain the desired result.

\section{Preliminaries}
\label{sec:prelims}
We assume familiarity with standard graph terminology\ifshort
~and the basic notions in parameterized complexity~\cite{DowneyF13,CyganFKLMPPS15} ($\spadesuit$)\fi. 
For $n\in \N$, set $[n] = \set{1, \ldots, n}$.
For a vertex $v$, let $\delta(v)$ denote the degree of $v$ in $G$ and let $\delta_S(v)$ be the number of neighbors of $v$ in a vertex set $S\subseteq V(G)$. 

Consider a vertex coloring $\chi:V(G)\rightarrow [\kappa]$ for a set $[\kappa]$ of \emph{colors}, with $\kappa \in [n]$.
Recalling the discussion in Section~\ref{sec:intro}, a \emph{fairlet} $F$ is a minimum multiset of colors from $[\kappa]$ such that the multiset $\{\chi(v)~|~v\in V(G)\}$ of colors occurring in $G$ can be partitioned into copies of $F$. Note that the fairlet $F$ can be computed from an instance of \textsc{FCC} in polynomial time, and we will denote the fairlet size $|F|$ as $\tilde{c}$; we will sometimes represent fairlets as a vector $(c_1,\dots,c_\kappa)$ encoding the number of times each of the respective colors occurs in $F$. 
A vertex subset $W$ is \emph{fair} if $\{\chi(v)~|~v\in W\}$ can be partitioned into copies of $F$. A \emph{clustering} is a partition of $V(G)$ where each part of the clustering is called a \emph{cluster}. The \emph{cost} of a clustering is the total number of edges in $E(G)$ with endpoints in different clusters plus the total number of non-edges inside each cluster, and a clustering is \emph{fair} if each of its clusters is fair. In view of the definition of \textsc{FCC}, we call a clustering a \emph{solution} if it is fair and achieves a minimum cost. We remark that each of the algorithms presented in this paper can also output a solution as a witness for a \yesinstance.
\iflong

\smallskip
\noindent \textbf{Parameterized Complexity.}
In parameterized complexity~\cite{DowneyF13,CyganFKLMPPS15}, the complexity of a problem is studied not only with respect to the input size, but also with respect to some problem parameter(s). 
The core idea behind parameterized complexity is that the combinatorial explosion resulting from the \NP-hardness of a problem can sometimes be confined to certain structural parameters that are small in practical settings. 

Formally, a {\it parameterized problem} $Q$ is a subset of $\Omega^* \times \N$, where $\Omega$ is a fixed alphabet. 
Each instance of $Q$ is a pair $(\mathcal{I}, k)$, where $k \in \N$ is called the {\it parameter}. 
The class \FPT \space denotes the class of all fixed-parameter tractable parameterized problems, that is, problems which can be solved in time $f(k)\cdot |\mathcal{I}|^{\bigoh(1)}$ for some computable function $f$. A weaker tractability class is \XP, which contains all parameterized problems that can be solved in time $|\mathcal{I}|^{f(k)}$ for some computable function $f$.
\fi

\smallskip
\noindent \textbf{Parameters of Interest.} The \emph{vertex cover number} of a graph $G$ ($\vcn(G)$) is the size of its minimum vertex cover. Recall that a vertex cover is a subset $X$ of vertices such that for each edge $e$ in the graph, at least one endpoint of $e$ lies in $X$. We note that the vertex cover number---along with a minimum-size vertex cover as a witness---can be computed in time $(1.25284^{\vcn(G)})\cdot |V(G)|^{\bigoh(1)}$~\cite{VC_FPT}.

A \emph{nice tree decomposition} of an undirected graph $G$ is a pair $(T, \mathcal{B})$, where $T$ is a tree (whose vertices are called \emph{nodes}) rooted at a node $r$ and $\mathcal{B}$ is a set $\{B_t\mid t\in T\}$ where $\forall t,B_t\subseteq V(G)$ such that the following hold:
\begin{itemize}
	\item For every $\{u,v\} \in E(G)$, there is a node $t$ such that $u,v \in B_t$.
	\item For every vertex $v \in V(G)$, the set of nodes $t$ satisfying $v \in B_t$ forms a subtree of $T$.
	\item $|B_\ell| = 1$ for every leaf $\ell$ of $T$ and $|B_r| = 0$.
	\item There are only three kinds of non-leaf nodes in $T$:
	\begin{itemize}
		\item \textsf{introduce}: a node $t$ with exactly one child $t'$ such that $B_t = B_{t'} \cup \{v\}$ for some $v \notin B_{t'}$.
		\item \textsf{forget}: a node $t$ with exactly one child $t'$ such that $B_t = B_{t'} \setminus \{v\}$ for some $v \in B_{t'}$.
		\item \textsf{join}: a node $t$ with exactly two children $p, q$ such that $B_t = B_p=B_q$.
	\end{itemize}
\end{itemize}

We call each set $B_t$ a \emph{bag}. The width of a nice tree decomposition $(T, \mathcal{B})$ is the size of the largest bag $B_t$ minus 1, and the \emph{treewidth} of $G$ ($\tw(G)$) is the minimum width of a nice tree decomposition of $G$.
A nice minimum-width tree decomposition of an undirected graph $G$ can be computed in time $|V(G)|^{\bigoh(1)}\cdot \tw(G)^{\bigoh(\tw(G)^3)}$~\cite{Bodlaender_1996}; more efficient approximation algorithms are known as well~\cite{Korhonen21}. 

Finally, we formalize a few notions needed to define treedepth.
A \emph{rooted forest} $\mathcal F$ is a disjoint union of rooted trees.
For a vertex~$x$ in a tree~$T$ of $\mathcal F$, the \emph{height} (or {\em depth}) of~$x$ in $\mathcal F$ is the number of vertices on the path from the root of~$T$ to~$x$.
The \emph{height of a rooted forest} is the maximum height of a vertex of the forest. 
Let $V(T)$ be the vertex set of any tree $T \in \mathcal{F}$.
Let the \emph{closure} of a rooted forest~$\mathcal{F}$ be the graph $\clos({\mathcal{F}})$ with the vertex set $V(\clos({\mathcal{F}}))=\bigcup_{T \in \mathcal{F}} V(T)$ and the edge set $E(\clos({\mathcal{F}}))=\{xy \mid \text{$x$ is an ancestor of $y$ in some $T\in\mathcal{F}$}\}$.
A \emph{treedepth decomposition} of a graph $G$ is a rooted forest $\mathcal{F}$ such that $G \subseteq \clos(\mathcal{F})$.
The \emph{treedepth} of a graph~$G$ ($\td(G)$) is the minimum height of any treedepth decomposition of $G$.
An optimal treedepth decomposition can be computed in time $2^{\td(G)^2}\cdot |V(G)|$~\cite{ReidlRVS14}.

It is known (and easy to verify) that $\vcn(G)\geq \tw(G)$ and $\td(G)\geq \tw(G)$.


\section{A Note on Fair Clusters in Tree-like Graphs}

Before proceeding to our main results, we prove a useful lemma that bounds the size of clusters occurring in a solution to an instance of \textsc{FCC} from above in terms of the treewidth of the graph. \ifshort The central idea is that graphs of small treewidth do not have too many edges, so splitting a larger cluster into two will cut fewer edges than it resolves non-edges, which then no longer lie within a cluster. \fi
Since the treewidth bounds both the vertex cover number and the treedepth from below, the same bound can be carried over also to the latter two parameters.

\iflong
\begin{lemma}
\fi
\ifshort
\begin{lemma}[$\spadesuit$]
\fi
\label{lem:MaximumClustersizeTreewidth}
    Every \yesinstance of \fccshort with treewidth $\tw$ and fairlet size~$\tilde{c}$ admits a solution such that each cluster has size at most $\max(24\tw,\tilde{c})$.
\end{lemma}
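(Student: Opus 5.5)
We argue by exchange: start from a minimum-cost fair clustering that, among all minimum-cost fair clusterings, has the fewest clusters of size greater than $\max(24\tw,\tilde c)$. If some cluster $C$ still has $|C| > \max(24\tw,\tilde c)$, we split it into two fair parts and show that the cost does not increase. Repeating this, or arguing by contradiction on the number of large clusters, gives the claim.

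Two facts drive the argument. First, since $|C|>\tilde c$ and $C$ is fair, $C$ is a disjoint union of $m=|C|/\tilde c\ge 2$ fairlet-copies, and any union of these copies is again fair. Second, every subgraph of a graph of treewidth $\tw$ is $\tw$-degenerate, so $G[C]$ has at most $\tw\cdot|C|$ edges.

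Now split $C$ into $C_1\cup C_2$, where each part is a union of fairlet-copies. Removing all pairs between $C_1$ and $C_2$ from the cluster changes the cost by
\[
\Delta \;=\; e_G(C_1,C_2)\;-\;\bigl(|C_1||C_2|-e_G(C_1,C_2)\bigr)\;=\;2e_G(C_1,C_2)-|C_1||C_2|.
\]
We need $\Delta\le 0$. We have $e_G(C_1,C_2)\le |E(G[C])|\le \tw|C|$. Choose the split as balanced as the fairlet structure allows, by assigning $\lfloor m/2\rfloor$ copies to $C_1$. Then $|C_1|\ge \lfloor m/2\rfloor\tilde c$, and since $m\ge 2$ this gives $|C_1|\ge |C|/3$. (For $m=3$ we get $|C_1|=|C|/3$; in general $\lfloor m/2\rfloor/m\ge 1/3$.) Likewise $|C_2|\ge |C|/2$. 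Hence $|C_1||C_2|\ge |C|^2/6$, while $2e_G(C_1,C_2)\le 2\tw|C|$. So $\Delta\le 0$ whenever $|C|\ge 12\tw$, which holds since $|C|>24\tw$.

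The resulting clustering is fair, has cost no larger than before, and therefore is still optimal. If either part is still large, we recurse on it. Each split strictly decreases $\sum_{C}\max(0,|C|-\max(24\tw,\tilde c))$, or more simply strictly increases the number of clusters, which is bounded by $n$. So the process terminates, and at the end every cluster has size at most $\max(24\tw,\tilde c)$.

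The main obstacle is that a cluster may be large but consist of few copies of a large fairlet. Then a balanced split is not available, and the size bound must be in the maximum form with $\tilde c$. The case analysis handles this: either $|C|\le\tilde c$, or $m\ge2$ and the one-third balance argument applies. The degeneracy bound on edges, $|E(H)|\le \tw\,|V(H)|$ for any subgraph $H$, is standard and is the only treewidth fact used. The slack from $12\tw$ up to $24\tw$ leaves room for cruder estimates, for instance bounding $e_G(C_1,C_2)$ via the degeneracy ordering or dealing with rounding.
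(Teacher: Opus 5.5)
Your proposal is correct and is essentially the paper's proof. You split an oversized cluster into $\lfloor m/2\rfloor$ and $\lceil m/2\rceil$ fairlet copies and weigh the at most $\tw|C|$ edges of $G[C]$ against the $|C_1||C_2|$ separated pairs; your single bound $|C_1||C_2|\ge |C|^2/6$ replaces the paper's case split into $d=2$ and $d\ge 3$. The paper concludes by a strict cost decrease that contradicts optimality, and your estimate gives strictness too since $|C|>12\tw$. That route is cleaner than your ``fewest large clusters'' framing, which would fail on its own because a split can create two large clusters; your fallback termination argument via the number of clusters is sound.
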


\iflong
\begin{proof}
Towards a contradiction, assume that a given instance admits a solution~$\mathcal{P}$ containing a cluster $Z$ such that $|Z|>\max(24\tw,\tilde{c})$. Since $Z$ is fair, it contains precisely $d\cdot \tc$ vertices for some integer $d$, and by assumption we have $d\geq 2$. Consider an arbitrary partition of $Z$ into fair clusters $Z_1$ and $Z_2$ such that $|Z_1|=\lfloor \frac{d}{2} \rfloor \cdot \tc$ and $|Z_2|=\lceil \frac{d}{2} \rceil \cdot \tc$. Let $\mathcal{P}'$ be the fair clustering obtained from~$S$ by replacing the cluster $Z$ with the two clusters $Z_1$, $Z_2$.

Let $\cost(\mathcal{P})$ and $\cost(\mathcal{P}')$ denote the costs of the respective solution. Observe that by construction, the latter can be obtained from the former by accounting for the edges and non-edges between $Z_1$ and $Z_2$, as follows: 
\begin{align*}
\cost(\mathcal{P}')=\cost(\mathcal{P}) & +  |\{\{v,w\}~|~v\in Z_1, w\in Z_2, \set{v,w}\in E(G)\} | \\
 & - |\{\{v,w\}~|~v\in Z_1, w\in Z_2, \set{v,w}\not \in E(G)\} |.
\end{align*}

It is known (and easy to observe from the definition of tree-decompositions) that every graph of treewidth $\tw$ has at most $\tw\cdot |V(G)|$ edges, and that the class of graphs of bounded treewidth is hereditary. Since the graph induced on $Z$ has treewidth at most $\tw$, the set on the first row in the above equation has size at most $\tw\cdot |Z|$, while the set in the second row has size at least ${|Z_1|\cdot |Z_2|}-(\tw\cdot |Z|)$. We now consider the following two subcases depending on the value of $d$.

First, assume $d\geq 3$. Then we obtain 
$\tw\cdot |Z| < {|Z_1| \cdot |Z_2|}-(\tw\cdot |Z|)$; indeed, by using 
$|Z| = \tc d >24\tw$ we have:
\begin{align*}
    |Z_1|\cdot |Z_2| 
    &\ge \tc(\lceil\tfrac{d}{2}\rceil-1) \cdot \tc\lceil\tfrac{d}{2}\rceil 
    \ge \tc^2d^2(\tfrac{1}{4}-\tfrac{1}{2d})
    \ge \tfrac{1}{12} \tc^2d^2 > 2\tw \cdot |Z|.
\end{align*}

Second, assume $d=2$. Then, since $|Z| = 2\tc >24\tw$  we obtain an even stronger inequality---indeed:
\begin{align*}
    |Z_1|\cdot |Z_2| 
    =\tc^2 & > 12\tw \cdot \tc
     = 6\tw \cdot |Z|.
\end{align*}

Thus, $\cost(\mathcal{P}') < \cost(\mathcal{P})$, contradicting $\mathcal{P}$ being a solution. We remark that this bound suffices for our purposes, but that we do not claim it to be tight.
\qed
\end{proof} 
\fi

\section{Tractability via Vertex Cover Number}

In this section, we establish our first tractability result, which is also the only one that does not require the parameterization to include the fairlet size:

\thmvcn*

Since all pairs of vertices inside a cluster consisting of a set of pairwise independent vertices contribute to the cost, we have:
\begin{observation}\label{obs:vertexcover_remaining_easy}
    Let $(G,B,\chi)$ be an instance of $\fcc$ with a solution $\mathcal{P}$ and let $M$ be a vertex cover of $G$. Then every cluster $C$ in $\mathcal{P}$ such that $C\cap M=\emptyset$ satisfies $|C|=\tilde{c}$.
\end{observation}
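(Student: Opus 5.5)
The plan is an exchange argument: assume a solution $\mathcal{P}$ contains a cluster $C$ with $C\cap M=\emptyset$ and $|C|>\tilde{c}$, and split it into fair pieces that are strictly cheaper. Since $C$ is fair, $|C|=d\cdot\tilde{c}$ for some integer $d\ge 2$. Choose any partition of the multiset of colors of $C$ into $d$ copies of the fairlet $F$. This gives a partition of $C$ into $d$ fair sets $C_1,\dots,C_d$, each of size $\tilde{c}$. Let $\mathcal{P}'$ be obtained from $\mathcal{P}$ by replacing $C$ with $C_1,\dots,C_d$. Every cluster of $\mathcal{P}'$ is fair, so $\mathcal{P}'$ is a fair clustering.

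Next, compare the two costs. Only pairs of vertices inside $C$ that end up in different parts $C_i\ne C_j$ change status. Every other pair, whether inside some $C_i$ or with an endpoint outside $C$, contributes identically to both costs. Because $M$ is a vertex cover and $C\cap M=\emptyset$, the set $C$ is independent in $G$. Hence every pair $\{v,w\}\subseteq C$ is a non-edge. In $\mathcal{P}$ each such pair costs $1$, whereas in $\mathcal{P}'$ a separated pair costs $0$, and no edges are newly cut. So
\[
\cost(\mathcal{P}')=\cost(\mathcal{P})-\bigl(\tbinom{d\tilde{c}}{2}-d\tbinom{\tilde{c}}{2}\bigr)=\cost(\mathcal{P})-\tbinom{d}{2}\tilde{c}^{\,2}<\cost(\mathcal{P}),
\]
using $d\ge 2$ and $\tilde{c}\ge 1$. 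This contradicts the minimality of $\mathcal{P}$, so $|C|\le\tilde{c}$.

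Finally, since $C$ is fair and nonempty (it is a part of a partition), $|C|$ is a positive multiple of $\tilde{c}$, which forces $|C|=\tilde{c}$. No step is really an obstacle. The only point needing care is that splitting $C$ into fairlet-sized pieces preserves fairness, and this holds directly by the definition of a fair set.
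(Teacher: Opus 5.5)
Your proof is correct and follows essentially the paper's own reasoning. The paper justifies the observation only by noting that every pair inside a cluster of pairwise non-adjacent vertices contributes to the cost. Your exchange argument, which splits $C$ into $d$ fairlet-sized pieces and saves $\binom{d}{2}\tilde{c}^{\,2}>0$, is the explicit version of that remark.
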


The above observation implies that given a subset $Z\subseteq V(G)\setminus M$ of vertices outside the vertex cover, a minimum-cost fair clustering of $Z$ (if one exists) will ``incur'' a cost of precisely $\binom{\tc}{2}\cdot \frac{|Z|}{\tc}$ plus the number of edges incident to $Z$. Crucially, this does not depend on the specific partitioning of  $Z$ into fair clusters.

Recalling Lemma~\ref{lem:MaximumClustersizeTreewidth} and that $\vcn(G)\geq \tw(G)$, we notice that there are at most $\left\lceil\frac{24k}{\tc}\right\rceil$ different possible sizes a cluster in a solution can have. Indeed, every cluster in a solution must have a size from the set $\{\tc d \mid d\in [\left\lceil\frac{24}{\tc}\right\rceil]\}$.  
As a second prerequisite of our proof we establish the following reformulation of the correlation clustering cost which shows that, when the sizes of all clusters are prescribed, minimizing the cost is equivalent to maximizing the number of edges within clusters. 

\begin{observation}\label{obs:maximize_uncut_edges}
    Consider a graph $G$ and a clustering $\mathcal{P}$ of $V(G)$. Let $a$ be the number of unordered pairs of vertices $\set{v,w}$ such that $v$ and $w$ share a cluster in $\mathcal{P}$ and let $b$ be the number of edges $\set{v,w}$ such that $v$ and $w$ share a cluster in $\mathcal{P}$. Then, the cost of $\mathcal{P}$ is $|E(G)| + a - 2b$ since the number of non-edges inside clusters is $(a-b)$ and the number of edges between clusters is $(|E(G)|-b)$.
\end{observation}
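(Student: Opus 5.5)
The plan is a direct double-counting argument. It splits the two contributions to the cost of $\mathcal{P}$ according to whether a vertex pair lies inside a cluster or across clusters. Recall that the cost is the number of edges of $G$ whose endpoints lie in different clusters, plus the number of non-edges (unordered pairs of distinct vertices not joined by an edge) whose endpoints lie in a common cluster.

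First, I count the non-edges inside clusters. The $a$ unordered pairs $\set{v,w}$ with $v,w$ in a common cluster split into two disjoint classes: those with $\set{v,w}\in E(G)$, of which there are exactly $b$ by definition, and those with $\set{v,w}\notin E(G)$. Hence the number of non-edges inside clusters is exactly $a-b$.

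Second, I count the edges between clusters. Since $\mathcal{P}$ is a partition of $V(G)$, every edge $\set{v,w}\in E(G)$ has its endpoints either in the same cluster or in different clusters. There are $b$ edges of the first kind, so there are exactly $|E(G)|-b$ edges of the second kind.

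Adding the two counts gives $\cost(\mathcal{P}) = (a-b) + (|E(G)|-b) = |E(G)|+a-2b$. The only care needed is to keep all pairs unordered and to consider only pairs of distinct vertices, so that $a=\sum_{C\in\mathcal{P}}\binom{|C|}{2}$ and no pair is counted twice. I expect no real obstacle here.

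Finally, I note the consequence used afterwards. The quantity $a$ depends only on the multiset of cluster sizes, and $|E(G)|$ is fixed. So among clusterings with prescribed cluster sizes, minimizing the cost is equivalent to maximizing $b$, the number of edges lying within clusters.
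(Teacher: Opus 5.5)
Your proposal is correct and follows exactly the justification the paper embeds in the observation: you split the $a$ intra-cluster pairs into $b$ edges and $a-b$ non-edges, and the $|E(G)|$ edges into $b$ intra-cluster and $|E(G)|-b$ inter-cluster ones, then add the two counts. Your closing remark that only $b$ varies once cluster sizes are fixed is precisely how the paper uses the observation in the proof of Theorem~\ref{thm:vcn}.
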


We now have all the ingredients required to prove our first result. \ifshort($\spadesuit$)\fi

\ifshort
\begin{proof}[Proof Sketch for Theorem~\ref{thm:vcn} $(\spadesuit)$]
Given an instance $(G,B,\chi)$ of \textsc{FCC}, we first compute a minimum vertex cover $M$ of size $k$ using known algorithms~\cite{VC_FPT}. We apply exhaustive branching to determine which of the vertices in $M$ will belong to the same cluster in a hypothetical solution, and also to determine the exact size of each cluster intersecting $M$. By following up on the above arguments, we can bound the number of branches that need to be considered by $k^k\cdot (24k)^k$.

In each branch, we obtain a \emph{pre-clustering} which provides partial information about the clusters intersecting $M$. With Observation~\ref{obs:maximize_uncut_edges}, we can reduce the task of completing the clusters intersecting $M$ to the problem of finding a maximum-weight matching in an auxiliary bipartite graph, where one side corresponds to the ``free spots'' in the pre-clustering and the other side corresponds to vertices outside of $M$. An illustration of this step is provided in Figure~\ref{fig:MaximumWeightMatching}. 

Finally, the remaining vertices are assigned to fair clusters of size $\tc$ in an arbitrary fashion, whereas correctness is argued via Observation~\ref{obs:vertexcover_remaining_easy}. 
\qed
\end{proof}
\fi

\iflong
\begin{proof}[Proof of Theorem~\ref{thm:vcn}]
Let $(G,B,\chi)$ be an FCC instance with $n=|V(G)|$, $\kappa$ colors and fairlet vector $(c_1,\dots,c_\kappa)$. Firstly, a minimum-sized vertex cover $M$ is computed in time $\bigoh^*(1.25284^k)$, where $k$ is this minimum size. We determine the clusters that intersect $M$ and then arbitrarily assign the rest of the graph to fairlet-sized clusters. 

We start by considering all possible partitions of the vertex cover $M$ and for each of these partitions branch on (guess) which final size the cluster containing these sets of vertex cover vertices should have. Due to Lemma~\ref{lem:MaximumClustersizeTreewidth}, there are at most $24k$ options for the size of any cluster, and we only consider branches in which the guessed sizes sum up to at most $n$.

We refer to a partition of $M$ with guessed cluster sizes as a \emph{pre-clustering}, and to each of its clusters as a \emph{pre-cluster}.
Note that for each pre-cluster, we can easily derive from its composition and assumed size how many vertices of each color are still missing. Then the problem of assigning the non-vertex-cover vertices to the pre-clusters becomes a matching problem between the available ``spots'' and the rest of the graph. To find the optimal such assignment w.r.t. the cost, we can construct a special bipartite graph as follows and compute a maximum-weight matching, see Figure~\ref{fig:MaximumWeightMatching}. 

The vertices on the left side of the bipartite graph are the at most $n$ spots in the pre-clusters containing some vertex cover vertices. On the right are all remaining vertices $V(G)\setminus M$. For each spot $s$ of color $i$, construct an edge $e$ with weight $w(e)$ to all $v\in V(G)\setminus M$ with the same $i$. The weight $w(e)$ is defined as the number of neighbors vertex $v$ has in the pre-cluster of $s$. 
Note that the pre-clustering predetermines the number of clusters of each size in any clustering to which it is a pre-clustering. As this predetermines the value $a$ in Observation~\ref{obs:maximize_uncut_edges}, it remains to minimize the total cost of the clustering by maximizing the number edges within clusters.
Thus, computing a maximum weight matching will yield an optimal assignment of vertices for that specific pre-clustering. 
Lastly, due to Observation~\ref{obs:vertexcover_remaining_easy}, we can now arbitrarily cluster the remaining vertices into fairlets. Once a final solution is computed, it is easily verified, and if the cost is less than $B$, accept the instance. 

If the instance is accepted, then we indeed found a solution since it is checked explicitly. Conversely, if a solution of cost at most $B$ exists, it induces a partition of $M$ and our algorithm will certainly consider the same partition (since it considers all partitions) with the same sizes of those clusters. As the algorithm computes a minimum-cost way of extending this pre-clustering to a full clustering, it finds a solution of cost at most $B$ and accepts the instance. Overall, this means an optimal solution will not be overlooked by our algorithm. 

The total running time is FPT w.r.t.\ $k$. There are at most $k^k$ ways to partition $M$, and at most $24k$ choices for the size of each of the at most $k$ pre-clusters. We thus consider at most $((24k)^2)^k = k^{\bigoh(k)}$ distinct pre-clusterings. Constructing the bipartite graph for the maximum-weight matching requires $\bigoh(kn^2)$ time, as there are $\bigoh(n^2)$ edges overall and for each edge, we can compute its weight by iterating over the vertex cover nodes in the cluster in $\bigoh(k)$ time. As we have integer weights, computing a maximum weight matching on a bipartite graph $B$ can be solved in $\bigoh(\sqrt{n}n^2 \log(n))$ time with the approach by Gabow and Tarjan~\cite{matchingruntime}. 

Finally, clustering the remaining vertices outside the existing clusters takes an additional $\bigoh(n)$ time. For that we can first bucket the vertices by color in $\bigoh(n)$ time, and then fill up the fairlets by choosing the vertices from the buckets according to the fairlet vector.
This yields a total runtime of 
\[
    \bigoh\left(k^{\bigoh(k)}\left(kn^2+\sqrt{n}n^2\log(n^2)+n\right)\right) = k^{\bigoh(k)} \cdot \bigoh\left(\sqrt{n}n^2\log n\right).  \eqno\qed
\]
\end{proof}
\fi

\begin{figure}[t]
    \centering
    \resizebox{0.5\textwidth}{!}{
    \begin{tikzpicture}[
        blue/.style = {draw, circle, fill=blue!50, minimum size=4mm},
        blueplace/.style = {draw=blue, thick, circle, dashed, fill=blue!10, minimum size=4mm},
        purple/.style = {draw, circle, fill=purple!50, minimum size=4mm},
        purpleplace/.style = {draw=purple, thick,dashed, circle, fill=purple!10, minimum size=4mm},
        green/.style = {draw, circle, fill=green!50, minimum size=4mm},
        greenplace/.style = {draw=green, thick,dashed, circle, fill=green!10, minimum size=4mm}
        ]
        \node[blue](a){};
        \node[blue,     below=6mm of a](b){};
        \node[blue,     below=6mm of b](c){};
        \node[green,    below=6mm of c](d){};
        \node[green,    below=6mm of d](e){};
        \node[purple,   below=6mm of e](f){};
        \node[blueplace,    left=28mm of a](la){};
        \node[purpleplace,  left=28mm of b](lb){};
        \node[purpleplace,  left=28mm of c](lc){};
        \node[blueplace,    left=28mm of d](ld){};
        \node[greenplace,   left=28mm of e](le){};
        \node[greenplace,   left=28mm of f](lf){};
        \node[left=20mm of lb](vcA){};
        \node[green, double, very thick, above=1.5mm of vcA](vca){};
        \node[green, double, very thick, below=1.5mm of vcA](vcb){};

        \node[left=20mm of le](vcB){};
        \node[purple, double, very thick, above=1.5mm of vcB](vcc){};
        \node[purple, double, very thick, below=1.5mm of vcB](vcd){};

        \node[ above=6mm of la, xshift=4mm]{\Large{pre-clusters\hspace{6mm}available spots\hspace{6mm}rest of graph }};

        \node[left=12mm of a](w){};
        \node[above=0mm of w]{\large{w(e)}};
        
        \draw[-]
            (la) edge (a)
            (la) edge (b)
            (la) edge (c)
            (lb) edge (f)
            (lc) edge (f)
            (ld) edge (a)
            (ld) edge (b)
            (ld) edge (c)
            (le) edge (d)
            (le) edge (e)
            (lf) edge (d)
            (lf) edge (e)
            ;
        \draw[orange, very thick] \convexpath{vcb, vca, la, lc}{4mm};
        \draw[orange, very thick] \convexpath{vcd, vcc, ld, lf}{4mm};
    \end{tikzpicture}
    }
    \caption{The auxiliary graph used in the proof of Theorem~\ref{thm:vcn}; the available spots are matched with non-vertex cover nodes of the same color.}
    \label{fig:MaximumWeightMatching}
\end{figure}
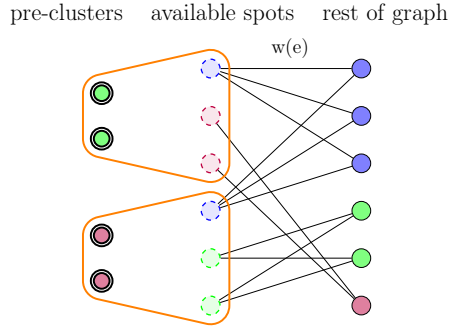

\section{XP Tractability via Treewidth and Fairlet Size}
The goal of this section is to provide an \XP-algorithm for \textsc{Fair Correlation Clustering} parameterized by $\tw(G)+\tc$. Towards this, we introduce some additional terminology that will be useful in our dynamic-programming based proof.

Given a node $t$ in a tree decomposition $T$, let $G_t\subseteq G$ be the subgraph of $G$ that is induced by the vertices in the bags of the subtree rooted at $t$. We refer to the vertices in $V(G_t)\setminus B_t$ as \emph{past vertices}, and to the subgraph of $G$ induced on the past vertices as the \emph{past graph}.
Similarly, $V(G)\setminus V(G_t)$ is the set of \emph{future vertices} and these induce the \emph{future graph}. 

\thmxptw*
\iflong
\begin{proof}
\fi
\ifshort
\begin{proof}[Proof Sketch $(\spadesuit)$]
\fi
Let $(G, B, \chi)$ be an instance of \textsc{FCC} with $n=|V(G)|$ and fairlet vector $(c_1,\dots,c_\kappa)$. Let $(T,\mathcal{B})$ be a nice tree decomposition of $G$ of width at most $w \coloneqq 5\tw+4$. Such a decomposition consisting of $\bigoh(n)$ nodes can be computed in time $n\cdot 2^{\bigoh(w)}$~\cite{Bodlaender_1996}. We process the nodes of $T$ in a bottom-up fashion with a dynamic programming algorithm Å. At each node $t$ of $T$ with an associated bag $B_t$, Å computes a table $\mathcal{M}_t$ of \textit{records}, which are tuples of the form $(\open,\current,\cost)$, each representing a possible partition of the vertices of the graph $G_t$ within a hypothetical solution to the \textsc{FCC} instance. For this, we define:
\begin{itemize}
    \item $\open$ is a multiset of tuples of the form $(\size,X)$. There is one tuple for each set in the partition which is disjoint from the bag $B_t$ and a proper subset of a cluster in the hypothetical solution.
    Here, $\size$ is the predicted size of the cluster in the hypothetical solution, which we write as a multiple of the fairlet size, $\size = d\cdot \tc$, and $X$ is a tuple of the form $X=(x_1,\dots,x_\kappa)$, where $x_i$ is the number of vertices of color $i$ in the set, for each color $i$. 
    \item $\current$ is a set of tuples of the form $(\size,X,S)$. There is one tuple for each set in the partition intersecting $B_t$. We let the $\size$ be as above, $S\subseteq B_t$ denote the intersection with the bag $B_t$ and $X$ similarly as above denotes how many vertices of each color there are in the set when \emph{not counting vertices in $S$}. Let $S_i$ denote the number of vertices of color $i$ in $S$.
    \item $\cost$ is an integer representing the cost of the clustering so far. More precisely, it is the cost in the hypothetical solution induced by all (non)-edges in $G$ incident to at least one vertex in the past.
\end{itemize}

Algorithm Å operates as follows. It starts by computing a table of records $\mathcal{M}_t$ for each leaf $t$ of $T$. It then proceeds computing the records in a bottom-up way until it computes the table for the root $\mathcal{M}_r$. Then, $(G,B,\chi)$ is a \yesinstance if and only if there exists a tuple $(\emptyset,\emptyset,B')\in \mathcal{M}_r$ with $B'\leq B$. 
The procedures for each of the four different types of tree nodes $t$ are as follows.

\paragraph*{\textsf{Leaf} node.} Let $v$ be the single vertex in the bag $B_t$. Add one record for each possible final size of the cluster containing $v$, all with $X=(0,\dots,0)$. As there are no past vertices yet, the cost is $0$ and $\open = \emptyset$. So let
\begin{align*}
   \mathcal{M}_t=\left\{\Big(\emptyset, \set{C_d},0\Big)\,\middle|\,d\in \left[\left\lceil\frac{24\tw}{\tc}\right\rceil\right]\right\}  \hspace{2mm}\text{with}\hspace{2mm} C_d = \big(\tc d,(0, \ldots, 0),\{v\}\big).
\end{align*}

\paragraph*{\textsf{Introduce} node.} Let $p$ be the child of $t$ in $T$, and let $v$ be the vertex introduced at $t$, i.e. $\set{v}= B_t\setminus B_p$. For each $(\open,\current,\cost)\in \mathcal{M}_p$, Å branches on where to put $v$. We consider all of the following options (with all choices of a current or open cluster in the latter two cases) and add the respective records to $\mathcal{M}_t$:
\begin{enumerate}
    \item $v$ starts a new cluster.
    \begin{itemize}
        \item This is very similar to what happens in a leaf node. Add the record $(\open,\current\cup \set{C_d},\cost)$ to $\mathcal{M}_t$ for each $d\in \left[\left\lceil\frac{24\tw}{\tc}\right\rceil\right]$, where $C_d$ is as above.
    \end{itemize}
    \item $v$ joins a current cluster $C=(\tc d,(x_1,\dots,x_\kappa),S)$ s.t.\ $S_{\chi(v)}+x_{\chi(v)} < c_{\chi(v)} d$.
    \begin{itemize}
        \item Let $\current' = \current \cup \set{(\tc d,(x_1,\dots,x_\kappa),S\cup\set{v})} \setminus \set{C}$ and add the record $(\open,\current',\cost)$ to $\mathcal{M}_t$.
    \end{itemize}
    \item $v$ joins an open cluster $C=(\tc d,(x_1,\dots,x_\kappa))$ s.t.\ $x_{\chi(v)} < c_{\chi(v)} d$.
    \begin{itemize}
        \item In this case, a past cluster becomes a current cluster. Add the record $(\open',\current',\cost)$ to $\mathcal{M}_t$, where $\open' = \open \setminus \set{C}$, and $\current' = \current \cup \set{(\tc d, (x_1,\dots,x_\kappa), \set{v})}$.
    \end{itemize}
\end{enumerate}
In each case, the cost stays the same as the set of past vertices does not change.

\paragraph*{\textsf{Forget} node.} Let $p$ be the child of $t$ in $T$, and let $v$ be the vertex forgotten at $t$, i.e. $\set{v}= B_p\setminus B_t$. 
For each $(\open,\current,\cost)\in \mathcal{M}_p$ we add one record $(\open',\current',\cost')$ to $\mathcal{M}_t$ as follows.
Consider the entry $C = (\tc d, (x_1, \ldots, x_\kappa), S)\in \current$ such that $v\in S$.
As $v$ becomes a past vertex, we have to update the cost. Its (non)-edges to other past vertices are already accounted for in $\cost$, so we obtain the new $\cost'$ by increasing $\cost$
\begin{itemize}
    \item  by $1$ for each non-edge to other vertices in $S$,
    \item  by $1$ for each edge to vertices in $B_t\setminus S$,
    \item by $\tilde{c}d - |S| - \sum_{i\in [\kappa]} x_i$ to account for the non-edges to all future vertices added to the cluster.
\end{itemize}

If the cluster is full, i.e., $S_i + x_i = c_i d$ for each color $i$, we do not track it any longer. 
In this case, $\current' = \current \setminus \set{C}$ and $\open' = \open$.
Else, if $v$ was the last current vertex of its cluster ($S=\set{v}$) then the cluster is moved from $\current$ to $\open$.
In this case, $\current' = \current \setminus \set{C}$ and $\open' = \open \cup \set{(\tc d, (x_1, \ldots, x_{\chi(v)}+1, \ldots, x_\kappa))}$.
Otherwise, $v$ simply removed from $S$ in its current cluster: $\open' = \open$ and 
\[\current' = \current \cup \set{(\tc d, (x_1, \ldots, x_{\chi(v)}+1, \ldots, x_\kappa), S\setminus\set{v})} \setminus \set{C}.\]

\paragraph*{\textsf{Join} node.} Let $p$ and $q$ be the two children of $t$ in $T$, and recall that $B_p=B_q=B_t$. Conceptually, the nodes $p,q$ have the same current vertices but disjoint pasts. 
For each combination of some $(\open_p,\current_p,\cost_p)\in \mathcal{M}_p$ and $(\open_q,\current_q,\cost_q)\in \mathcal{M}_q$, we check if the two records are \emph{compatible}, and if so, join them to add a number of new records $(\open',\current',\cost')$ to $\mathcal{M}_t$. 

The two records are compatible if they agree on the shape of the clusters intersecting $B_t$. Formally, this means that for each $(\tc d,(p_1,\dots,p_\kappa),S)\in \current_p$ there is $(\tc d,(q_1,\dots,q_\kappa),S)\in \current_q$ (and vice versa), and for each such pair and each color $i\in [\kappa]$ we have $p_i + q_i + S_i \le c_i d$. Define $\current'$ as the set of entries $(\tc d, (p_1 + q_1, \ldots, p_\kappa+q_\kappa), S)$ of all these pairs.

Each compatible pair now adds (one or more) records $(\open', \current', \cost')$ to $\mathcal{M}_t$ as follows. While $\current'$ is uniquely determined as described above, there are multiple options for $\open'$ as a cluster in a hypothetical solution might contain vertices from both $\open_p$ and $\open_q$. 
Thus, consider each pair of a distinct entry in $\open_p$ and a distinct entry in $\open_q$ and branch on how many of the entries of the first kind are merged with an entry of the second kind. Only consider pairings $(\tc d_p, (p_1,\ldots,p_\kappa))$ and $(\tc d_q, (q_1,\ldots,q_\kappa))$ where both entries have the same assumed final size, that is, $d_p = d_q$. They are merged into an entry $(\tc d_p, (p_1+q_1, \ldots, p_\kappa+q_\kappa))$ and any branch with a pair where $p_i+q_i > c_i d_p$ for some color $i$ is disregarded. Further disregard branches that build more pairs than there are entries of the respective types in the respective pasts.
For each remaining branch, let $\open'$ correspond to the multiset of open cluster types after the respective merging (where entries that are not merged are kept the same way as they were in $\open$).

Last, we determine $\cost'$. Let $P$ and $Q$ denote the sets of vertices in the past of nodes $p$ and $q$, respectively. Then the set of vertices in the past of $t$ is $P\cup Q$, where we remark that $P$ and $Q$ are disjoint by the definition of tree decompositions. Hence, we can compute $\cost'$ as the sum of $\cost_p$ and $\cost_q$ while subtracting the cost incurred by (non)-edges between $P$ and $Q$, as they would be accounted for twice, otherwise. By the definition of tree decomposition, there are no edges connecting $P$ and $Q$. Non-edges only incur a cost if they are incident to two vertices in the same cluster, so only non-edges in newly merged clusters have to be considered. Indeed, for each newly merged cluster obtained from entries $(\tc d_p, (p_1,\ldots,p_\kappa))$ and $(\tc d_q, (q_1,\ldots,q_\kappa))$, the number of non-edges counted twice is $\left(\sum_{i\in[\kappa]} p_i\right) \cdot \left(\sum_{i\in[\kappa]} q_i\right)$. Subtracting the total number of these non-edges among all newly merged clusters from $\cost_p+\cost_q$, yields $\cost'$.

\ifshort
To complete the proof, it remains to argue correctness and that the algorithm terminates in time at most $n^{\bigoh(\tw+\tc)^{\kappa+1})}$.
\fi
\iflong
\paragraph*{Correctness and Required Time.}
Suppose $(G,B,\chi)$ is a \yesinstance witnessed by some fair clustering $\mathcal{P}$. 
Note that at each node $t$, there is a record describing the subpartition induced by the vertices of $G_t$ on $\mathcal{P}$ and the algorithm identifies this record --- as it can be constructed from (at least one) subpartitioning (that is, a record) from a child node of $t$ (and is trivial for leafs). Recall that at the root node, all vertices are in the past. Hence, in a record for $\mathcal{P}$ at the root, $\cost$ precisely describes the cost of $\mathcal{P}$ and $\open$ and $\current$ are empty. Hence, the algorithm accepts the instance.

For the other direction, assume that the algorithm would accept the instance due to a record $(\emptyset, \emptyset, B')$ with $B'\le B$. Our construction ensures that for each record $(\open, \current, \cost)$ in a table at some node $t$, there is a partition of the vertices in $G_t$ that matches the description of $\open$,  $\current$, and $\cost$ as described above (for some assigned ``final'' $\size$ for each set). Consider a partition $\mathcal{P}$ for the record $(\emptyset, \emptyset, B')$ at the root node. Then $\mathcal{P}$ covers all vertices in $G$. Further, as $\open = \current = \emptyset$, for each set $P$ in $\mathcal{P}$ there is a \textsf{Forget} node $t$ such that previously there was a record for $P$ in $\current$ but now there is no record for $P$ in both $\current$ and $\open$. Correctness follows as this can, by construction, only happen if the $P$ is fair and, at the latest from this point on, the contribution of all (non)-edges incident to $P$ is accounted for in $\cost$.    

Last, we argue that the runtime of algorithm Å is as stated.
Any record in $\mathcal{M}_t$ is of the form $\{(\open,\current,\cost)\}$. Let us analyze how many different ways there are of writing such a record. The set $\open$ is a multiset, which can be modeled as a vector counting the number of entries of each type. A cluster in the past is characterized by its size (there are at most $24\tw$ options) and how many vertices of each color it has ($\kappa$ integers, each between $0$ and $\max(24\tw,\tc)$). Thus, there are at most $(\max(24\tw,\tc))^{\kappa+1}$ distinct entries in $\open$ with at most $n$ of each type. 
Next, $\current$ is a set of at most $|B_t| \le w$ entries. There are less than $w^w$ ways in which they describe a partition of $B_t$ and, by the same reasoning as above, for each of these there are at most $((\max(24\tw,\tc))^{\kappa+1})^w$ ways to set the respective values for $\size$ and vertices of each color.
As each (non)-edge contributes $0$ or $1$ to the cost, there are $n^2+1$ possible ways to set the cost.
In total, the number of records at each table is at most
\[
    n^{(\max(24\tw,\tc))^{\kappa+1}}\cdot w^w \cdot ((\max(24\tw,\tc))^{\kappa+1})^w \cdot (n^2+1).
\] 
The table at each of the $\bigoh(n)$ nodes is computed from the table(s) of its child node(s) in polynomial time in the size of the child table(s). This time dominates the computation of the tree decomposition, so the total runtime of Å is at most 
$n^{\bigoh((\max(24\tw,\tc))^{\kappa+1})},
$
using $\max(24\tw,\tc)=\bigoh(n)$ and $w\in \bigoh(\tw)$.
\fi
\qed\end{proof}

\section{Tractability via Treewidth for Fairlets of Size at Most 2}
While the question of whether \fcc in general is fixed-parameter tractable w.r.t.\ the treewidth remains open, for the special case of fairlet size $\tc\leq 2$ we can answer it affirmatively.

In particular, the algorithm underlying Theorem~\ref{thm:xptw} stores a large amount of information about clusters in the past graph. However, for $\tc\leq 2$, we show that, without loss of generality, clusters intersecting the past and future vertices but not the current bag $B_t$ are trivial and easy to keep track of. Towards this, we establish the following lemma:

\begin{lemma}
    \label{lem:SplitoffPairsOfVertices}
Every \yesinstance of \fcc with fairlet size $\tilde{c}=2$ on a graph $G$ admits a solution $\mathcal{P}$ with the following property: each cluster in $\mathcal{P}$ of size greater than $2$ induces a connected subgraph of $G$.
\end{lemma}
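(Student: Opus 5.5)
Take a solution $\mathcal{P}$ minimizing the cost and, among those, maximizing the number of clusters. Suppose some cluster $Z\in\mathcal{P}$ with $|Z|>2$ induces a disconnected subgraph of $G$. We will show that $Z$ can be split without increasing the cost. This produces a solution with more clusters, a contradiction. Since $\tc=2$, every fair set has even size, and each fair set either contains exactly one vertex of each of two colors per fairlet (fairlet $(1,1)$) or is monochromatic (fairlet of one color with multiplicity two, which only occurs when $\kappa=1$; then any even-size set is fair).

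Key step: let $Z=A\cup A'$, where $A$ is the vertex set of one connected component of $G[Z]$ and $A'=Z\setminus A\neq\emptyset$. There are no edges between $A$ and $A'$. If both $A$ and $A'$ are fair, replacing $Z$ by $A,A'$ changes the cost by $-|A||A'|<0$, contradicting optimality outright. Otherwise $A$ is unbalanced. For fairlet $(1,1)$ with colors red and blue, say $A$ has $r>b$ reds and blues, so $A'$ has $r-b$ more blues than reds. The plan is to peel off a single fairlet $\{u,w\}$ with $u\in A$ red and $w\in A'$ blue, both of which exist since $A$ has excess red and $A'$ excess blue. Splitting $Z$ into $\{u,w\}$ and $Z\setminus\{u,w\}$ (both fair, the latter nonempty as $|Z|\ge4$) changes the cost by
\[
\Delta=\bigl(e(u,Z\setminus\{u,w\})-\bar e(u,Z\setminus\{u,w\})\bigr)+\bigl(e(w,Z\setminus\{u,w\})-\bar e(w,Z\setminus\{u,w\})\bigr)
\]
($\{u,w\}$ itself is a non-edge and stays inside a cluster). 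Here $e$ and $\bar e$ count edges and non-edges. Vertex $u$ has no edges into $A'$, so $\deg_Z(u)\le|A|-1$, and similarly $\deg_Z(w)\le|A'|-1$. Hence $\Delta\le 2(|A|-1)-(|Z|-2)+2(|A'|-1)-(|Z|-2)=0$. So splitting never costs more, and we obtain an optimal solution with more clusters. For $\kappa=1$, pick any $u\in A$, $w\in A'$; the same computation applies.

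I would double-check the counting in $\Delta$, which is the main obstacle: the removed pair's mutual non-edge and each vertex's relationship to the other part must be accounted for exactly once. In the bound I use $e(u,\cdot)\le |A|-1$, counting only neighbors in $A\setminus\{u\}$ (all of which lie in $Z\setminus\{u,w\}$), and $\bar e(u,\cdot)=(|Z|-2)-e(u,\cdot)$. So $u$'s contribution is $2e(u,\cdot)-(|Z|-2)\le 2|A|-|Z|$, and $w$'s is at most $2|A'|-|Z|$. These sum to $2|Z|-2|Z|=0$, as claimed.

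Since the number of clusters is at most $n$, the extremal choice is well defined, and the contradiction shows every cluster of size greater than $2$ in $\mathcal{P}$ is connected. The lemma is stated for $\tc=2$; the case $\tc=1$ (used later) follows by the same split argument with $\{u\}$ peeled off, where the change is $e(u)-\bar e(u)\le(|A|-1)-|A'|\le 0$ after choosing $A$ as a smallest component.
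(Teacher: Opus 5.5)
Your argument is correct, but it uses a different split than the paper.

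The paper's route is as follows. It takes a component $A$ of $G[Z]$ with $r\ge b$, removes a set $R$ of $r-b$ red vertices, and splits $Z$ into $Z_1=A\setminus R$ and $Z_2=Z\setminus Z_1$. Cross edges can only run between $Z_1$ and $R$, so there are at most $|Z_1|\cdot|R|$ of them. There are at least $|Z_1|\cdot|Z\setminus A|$ cross non-edges. The color count $|Z\setminus A|\ge r-b=|R|$ finishes it.

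You instead peel off one red--blue pair $\{u,w\}$ straddling $A$ and $A'$, and bound the change through the degrees of $u$ and $w$ inside $Z$. Your computation $\Delta\le(2|A|-|Z|)+(2|A'|-|Z|)=0$ is right. Your extremal choice (minimum cost, then maximum number of clusters) is equivalent to the paper's iterative modification.

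The advantage of your split is that it always makes progress: the pair exists as soon as $A$ is unfair. The paper's $Z_1=A\setminus R$ is empty when the chosen component has no blue vertex (for instance, if every component of $G[Z]$ is monochromatic), so its step then creates no new cluster. In exchange, the paper's split detaches a whole balanced part of a component at once, and it coincides with your first case when $r=b$.

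Two minor points. Since a fairlet is a \emph{minimum} multiset, a single-color instance has $\tc=1$, so your case of a one-color fairlet of size two is vacuous (harmlessly so). For $\tc=1$, splitting $Z$ into $A$ and $A'$ already decreases the cost by $|A|\cdot|A'|$, so the smallest-component argument is not needed.
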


\begin{proof}
Consider an arbitrary solution $\mathcal{P'}$ to the instance and suppose it contains a cluster $Z$ not satisfying the claimed property. Let $A\subseteq Z$ be maximal such that $A$ induces a connected component in $G$ and let $r$ and $b$ be the numbers of vertices of the first color (\emph{red}) and the second color (\emph{blue}) in $A$, respectively. Without loss of generality, assume $r \ge b$. Consider an arbitrary set $R\subset A$ of $r-b$ red vertices from $A$ and let $\mathcal{P}$ be the fair clustering obtained from $\mathcal{P'}$ by replacing $Z$ by $Z_1 \coloneqq A \setminus R$ and $Z_2 \coloneqq Z\setminus Z_1$. Now $Z_1$, satisfies the lemma's requirement, while $Z_2$ may still not satisfy it. We will show that the cost of $\mathcal{P}$ is at most the cost of $\mathcal{P'}$. Assuming that this holds, we can exhaustively apply the above modification to obtain a solution satisfying the lemma's requirement, where termination is guaranteed as each modification strictly increases the number of clusters.

Thus, it remains to show $\cost(P) \le \cost(P')$, where $\cost(\cdot)$ denotes the cost of the respective clustering. Just like in the proof of Lemma~\ref{lem:MaximumClustersizeTreewidth}, the former can be obtained from the latter as follows: 
\begin{align*}
\cost(\mathcal{P})=\cost(\mathcal{P}') & +  |\{\{v,w\}~|~v\in Z_1, w\in Z_2, \set{v,w}\in E(G)\} | \\
 & - |\{\{v,w\}~|~v\in Z_1, w\in Z_2, \set{v,w}\not \in E(G)\} |.
\end{align*}
By construction, the set in the first line has size at most $|Z_1|\cdot |R|$ and the set in the second line has size at least $|Z_1| \cdot |Z\setminus A|$. As $Z$ is fair, the excess of red vertices in $A$ equals the excess of blue vertices in $Z\setminus A$. Thus, there are at least $r-b = |R|$ blue vertices in $Z\setminus A$ and thereby $\cost(\mathcal{P})\le \cost(\mathcal{P}')$. \qed
\end{proof}

We call a solution \emph{nice} if it satisfies the property in Lemma~\ref{lem:SplitoffPairsOfVertices}. We can now restrict our focus on identifying whether an instance admits any nice solution.

\thmfpttw*

\begin{proof}
We first discuss the case $\tc=2$, for which we modify the algorithm used for Theorem~\ref{thm:xptw}. Instead of keeping track of all possibilities for past clusters, we only keep ones that can occur in nice solutions. Notably, we disregard records for which there is an entry with assumed size larger than $2$ in $\open$ as every vertex added later to complete the respective cluster is necessarily not adjacent to any of the current ones.
We realize this by adjusting the procedure in \textsf{forget} nodes: we no longer add any record for which $\open$ consists anything but entries of type $(2,(1,0),v)$ or $(2,(0,1),v)$, for any vertex $v$.
Crucially, after this change there is still a record $(\emptyset,\emptyset,B')$ with $B'\le B$ for any nice solution to a \yesinstance.

This dramatically reduces the size of the tables: Now, the set $\open$ consists of $0$ to $n$ entries and each has one of two types, so there are $O(n^2)$ options for the $\open$ entry in a record. The runtime of the algorithm for the case of $\tc=2$ (and hence $\kappa = 2)$ reduces to at most
\[
    O(n^2)\cdot (5\tw+4)^{5\tw+4} \cdot ((\max(24\tw,\tc))^{\kappa+1})^{5\tw+4} \cdot (n^2+1) =\tw^{\bigoh(\tw)} \bigoh(n^4).
\]
The above running time is not claimed to be tight---for instance, it is not difficult to observe that one does not need to store the number of isolated blue and red vertices as two numbers in the past---however, it suffices for our classification result. Finally, while further adapting the algorithm to the fairness-oblivious case of $\tc=1$ is rather straightforward, for our purposes it suffices to note that there is a polynomial-time reduction from that case to \textsc{FCC} with $\tc=2$ which only increases the treewidth by at most a factor of $2$~\cite[Section~5.4]{abs200203508}---in particular, the reduction changes the graph by essentially duplicating every vertex. 
 \qed
\end{proof}

A surprising consequence of this result is that it settles the (to the best of our knowledge) open question, whether fairness-oblivious \textsc{Correlation clustering} (that is, \textsc{Cluster Editing}) is fixed-parameter tractable w.r.t.\ the treewidth (as these problems are equivalent to \fcc with $\tc = 1$). 
\begin{corollary}
    Fairness-oblivious \textsc{Correlation Clustering} (that is, \textsc{Cluster Editing}) is in $\FPT$ w.r.t.\ the treewidth of $G$.
\end{corollary}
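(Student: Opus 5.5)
The corollary follows from Theorem~\ref{thm:fpttw} by a reduction: an instance of \textsc{Cluster Editing} is exactly an instance of \fccshort with fairlet size $\tc=1$. Concretely, given a graph $G$ and budget $k$, we color every vertex with a single color ($\kappa=1$). The fairlet is then the single-element multiset $\{1\}$. Every vertex subset is fair, so every clustering is fair.

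Next we check that the costs agree. The cost of a clustering in \fccshort counts the edges between clusters plus the non-edges inside clusters. This is exactly the number of edge modifications (deletions plus insertions) needed to turn $G$ into a disjoint union of cliques given by that clustering. Conversely, any set of at most $k$ modifications that yields a cluster graph induces a clustering whose cost is at most $k$. Hence $(G,k)$ is a yes-instance of \textsc{Cluster Editing} if and only if $(G,k,\chi)$ is a \yesinstance of \fccshort with $\tc=1$. The treewidth is unchanged, since the graph itself is unchanged.

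It remains to invoke Theorem~\ref{thm:fpttw}, which covers all instances with fairlet size at most $2$, in particular $\tc=1$, and runs in time $f(\tw)\cdot n^{\bigoh(1)}$. The $\tc=1$ case was handled inside that proof via the cited reduction to $\tc=2$ by duplicating vertices, which at most doubles the treewidth. That bound is still a function of $\tw$, so the combined running time remains \FPT{} in $\tw(G)$.

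No step here is a real obstacle; the only point needing care is the exact equivalence of the cost notions. If one wanted a self-contained argument instead of the citation, the place needing work would be the duplication reduction. There one would check that an optimal solution never separates a vertex from its twin. With that in hand, the costs scale predictably, and an appropriately rescaled budget preserves the yes/no answer.
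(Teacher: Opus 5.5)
Your proposal is correct and follows the paper's own argument. \textsc{Cluster Editing} is exactly \fccshort{} with a single color and $\tc=1$: every clustering is fair and the costs coincide. The corollary is then an immediate consequence of Theorem~\ref{thm:fpttw}, whose proof covers $\tc=1$ through the cited vertex-duplication reduction to $\tc=2$, which at most doubles the treewidth.
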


\section{Tractability via Treedepth and the Fairlet Size}

This section provides our final contribution towards the complexity map of \textsc{FCC}:

\thmtd*

Before proceeding to the proof of Theorem~\ref{thm:td}, we first show how to solve instances where every connected component has bounded size. Such instances will arise during our reduction procedure, and unlike in typical graph problems we cannot solve each connected component entirely independently of the rest.

\iflong
\begin{lemma}
\fi
\ifshort
\begin{lemma}[$\spadesuit$]
\fi
\label{lem:solve_for_small_components}
      Let $\inbound, \solbound, B \in \N$ and consider a colored graph $G$ with $\kappa$ colors such that each connected component in $G$ has size at most $\inbound$. There is an algorithm which determines whether $G$ admits a fair clustering of cost at most $B$ such that each cluster has size at most $\solbound$ and runs in time $f(\kappa+\inbound+\solbound) \cdot |V(G)|^{\bigoh(1)}$ for some computable function $f$.      
\end{lemma}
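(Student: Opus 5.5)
We will encode the problem as an integer linear program whose number of variables and constraints is bounded by a function of $\kappa+\inbound+\solbound$, and then invoke Lenstra's algorithm (in the improved form of Kannan or Frank and Tardos). The main idea is to forget vertex identities and work only with isomorphism types.

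\textbf{Component types.} Call two connected components \emph{equivalent} if there is a color-preserving isomorphism between them. Each component has at most $\inbound$ vertices and each vertex one of $\kappa$ colors, so the number of types is at most $g(\kappa,\inbound)=2^{\binom{\inbound}{2}}\cdot \kappa^{\inbound}\cdot \inbound$. Let $n_\tau$ be the number of components of type $\tau$. All types and the counts $n_\tau$ can be computed in polynomial time, since each isomorphism test on at most $\inbound$ vertices takes $\inbound!$ time. In any clustering, each component $H$ is cut into pieces $H\cap C$ over the clusters $C$. For a fixed component type $\tau$, a \emph{cut pattern} is a partition of the vertex set of a fixed representative of $\tau$. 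Its cost contribution is the number of edges of $H$ joining different parts, and this depends only on the pattern.

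\textbf{Cluster types.} A cluster $C$ of size at most $\solbound$ meets at most $\solbound$ components. Its relevant description is the multiset of \emph{pieces} it contains, where a piece is recorded as a pair (component type $\tau$, a part $P$ of a representative). Two pieces coming from different components have no edges between them. Hence the cost inside $C$, namely its non-edges, is determined by this multiset: it equals $\binom{|C|}{2}$ minus the sum of the edges inside the pieces. Fairness of $C$ is likewise determined by the multiset. The number of such cluster types is bounded by some $h(\kappa,\inbound,\solbound)$, and we keep only the fair ones.

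\textbf{The ILP.} We introduce the following integer variables:
\begin{itemize}
\item $y_{\tau,\pi}\ge 0$, the number of components of type $\tau$ cut according to pattern $\pi$;
\item $z_\sigma\ge 0$, the number of clusters of fair type $\sigma$.
\end{itemize}
We impose the following constraints:
\begin{itemize}
\item $\sum_\pi y_{\tau,\pi}=n_\tau$ for every $\tau$;
\item for every piece-kind $(\tau,\pi,P)$ with $P$ a part of $\pi$, the number of such pieces produced, which is $y_{\tau,\pi}$, equals the number consumed, $\sum_\sigma m_\sigma(\tau,\pi,P)\, z_\sigma$, where $m_\sigma(\tau,\pi,P)$ is the multiplicity of that piece in $\sigma$;
\item $\sum_{\tau,\pi}\mathrm{cut}(\tau,\pi)\,y_{\tau,\pi}+\sum_\sigma \mathrm{nonedge}(\sigma)\, z_\sigma\le B$.
\end{itemize}
To make the balancing constraint meaningful, cluster types are refined so that they list pieces by $(\tau,\pi,P)$ rather than just $(\tau,P)$. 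This keeps their number bounded, since the number of patterns per type is at most $\inbound^{\inbound}$.

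\textbf{Correctness.} Given a solution clustering, counting its components and clusters by type yields a feasible integer point. Conversely, from a feasible point we assign to each component of type $\tau$ a pattern so that exactly $y_{\tau,\pi}$ components receive pattern $\pi$; this is possible because $\sum_\pi y_{\tau,\pi}=n_\tau$. This produces an explicit multiset of concrete pieces. We then greedily fill each of the $z_\sigma$ clusters with distinct concrete pieces of the required kinds, which succeeds because supply equals demand for every kind. Each resulting cluster has the intended type, so it is fair and has the computed cost, and the total cost is at most $B$.

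The number of variables is bounded by a function of $\kappa+\inbound+\solbound$. Lenstra's algorithm then gives running time $f(\kappa+\inbound+\solbound)\cdot |V(G)|^{\bigoh(1)}$; the polynomial factor accounts for the encoding length of the $n_\tau$ and of $B$.

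\textbf{Main obstacle.} The delicate step is the argument that cost and fairness decompose exactly over the type counts. It must be shown that no edges run between pieces of different components, and that pieces of the same component never end up in the same cluster. The latter needs care: two parts of one component must not be placed in the same cluster, since that would change the cost. This is avoided by defining patterns as the partition induced by the clusters, so distinct parts lie in distinct clusters by construction. In the reconstruction, one must likewise forbid a cluster from receiving two parts of the same concrete component. I would handle this by noting that merging such parts yields a different pattern of no larger cost. The cost cannot increase because the cut edges between the two parts are regained while the non-edges between them are added back, so the net change is non-positive with respect to the $B$ constraint. Alternatively, one can enforce distinctness during the greedy assignment by a counting argument.
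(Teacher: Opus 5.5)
Your proposal is correct and is essentially the paper's argument: an ILP over color-isomorphism classes of components, their cut patterns, and bounded fair cluster shapes, solved via Lenstra's algorithm. The only difference is bookkeeping: the paper aggregates pieces by their color vector (its $\mathsf{comp}_t$ variables) and charges non-edges inside pieces to the cut cost, whereas you label pieces by $(\tau,\pi,P)$ and charge those non-edges to the cluster.

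Your way of handling two parts of one component that land in the same cluster is sound and more explicit than the paper's: the true cost then only drops below the ILP objective. Drop the alternative ``counting argument'' for enforcing distinctness, though. It fails in general, for example when $y_{\tau,\pi}=1$ and a used cluster type demands two parts of $\pi$.
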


\ifshort
\begin{proof}[Proof Sketch]
We encode the instance as an integer linear program (ILP) with at most $f'(\kappa+\inbound+\solbound)$ variables and constraints for a computable function $f'$. 
      Such an ILP can be solved in the desired time, e.g., by using the result by Lenstra~\cite{Lenstra_1983} and its subsequent improvements \cite{Frank_Tardos_1987,Kannan_1987}. Since there is only a bounded number of pairwise non-isomorphic connected components in $G$ and each can be partitioned in only a bounded number of ways, we can use variables to capture how each of the isomorphism classes of components is partitioned by the solution. Moreover, there is a bounded number of ways how each fair cluster in the solution can be composed from the parts of the connected components, and we can also use variables to capture this. Finally, one adds constraints which ensure that the budget is not exceeded and that the numbers of parts form a partitioning.
\qed
\end{proof}
\fi

\iflong
\begin{proof}
      We encode the instance as an integer linear program (ILP) such that there are at most $f'(\kappa+\inbound+\solbound)$ variables and constraints for a computable function $f'$. 
      Such an ILP can be solved in the desired time, e.g., by using the result by Lenstra~\cite{Lenstra_1983} and its subsequent improvements \cite{Frank_Tardos_1987,Kannan_1987}.

      We say that the \emph{type} of a set $M$ of vertices each vertex colored in one of $\kappa$ colors, is the vector $t = (t_1,\ldots, t_\kappa)\in (\set{0}\cup \N)^\kappa$ such that $M$ consists of precisely $t_i$ vertices of each color $i$. For $\ell \in \N$, we define $T_\ell \coloneqq (\set{0}\cup [\ell])^\kappa$, which covers all types of sets of size at most $\ell$. 
    We describe three sets of variables for the ILP instance.
    The first set, $\Comps$, contains a variable $\comp_t$ for each vector $t\in T_\solbound$. Intuitively, the variable $\comp_t$ will describe how many connected components \emph{within a cluster} of a hypothetical solution will have type $t$. 
    
    The next set, $\Cuts$, contains variables describing how the connected components in $G$ are partitioned into connected components within clusters by a hypothetical solution.
    Partition the set of connected components in $G$ such that two connected components share a class if they are isomorphic (respecting the vertex colors), and let $R$ be the set of such classes. As all connected components have size at most $\inbound$, we have $|R| \le 2^{\binom{\inbound}{2}} \inbound^\kappa$.
 For each class $C$ in $R$, let $Q_C$ be the set of partitions of the vertex set in any graph represented by $C$. Then $|Q_C| \le \inbound^{\inbound}$.
    Let $\Cuts$ consist of one variable $\cut_{C,q_C}$ for each $C\in R$ and each $q_C\in Q_C$. Intuitively, it represents how many connected components in class $C$ are partitioned as described by $q_C$ by the clustering of a hypothetical solution.
  For each $\cut_{C,q_C} \in \Cuts$ let $\cost(\cut_{C,q_C})$ refer to the number of edges connecting vertices in distinct sets of the respective partition plus the total number of non-edges in the graphs induced by each set of the partition $q_C$. 
      Also, for each $\comp_t \in \Comps$, let $\counter(\cut_{C,q_C}, \comp_t)$ denote the number of sets of type $t$ in the partition described by $t_C$ on any graph in class $C$.    
      
    The third set, $\Sols$, contains variables describing from which types of connected components the clusters in a hypothetical solution are built. Let $S = (\set{0\cup [\solbound]})^{|T_\solbound}|)$. Then, let $\Sols'$ consists of a variable $\sol_s$ for each $s\in S$. Intuitively, $\sol_s$ with $s=(s_1, \ldots, s_{|T_\solbound|})$ states, how many of the clusters in a hypothetical solution have shape $s$.
    A cluster has shape $s$ if, for each $j\in [|T_\solbound|]$, it consists of precisely $s_j$ connected components of type $t_j$ (the $j^{\text{th}}$ type in $T_\solbound$). Let $\Sols \subseteq \Sols'$ be the restriction to shapes such that each shape describes a \emph{fair} cluster, that is, a set of vertices constructed as follows would be fair: it starts empty and for each $j\in [|T_\solbound|]$ and each color $i$, we add $s_j \cdot t_j[i]$ vertices of color $i$.
      For each $\comp_{t_j}\in\Comps$ and $\sol_s\in \Sols$, let $\counter(\sol_s, \comp_{t_j}) \coloneqq s_j$ denote the number of connected components of type $t_j$ in shape $s$.

    We remark that 
    $|\Comps| \le (\solbound+1)^\kappa$,  
    $|\Cuts| \le |R|\inbound^\inbound \le 2^{\binom{\inbound}{2}}\inbound^{\kappa+\inbound}$, and 
    $|\Sols| \le (\solbound+1)^{(\solbound+1)^\kappa}$.
    We next add the following constraints to the ILP.
    First, we require that each variable is at least $0$.
    For each $C \in R$, let $|C|$ denote the number of connected components in $G$ of class $C$ and add the constraint
    \begin{equation}
       |C|=\sum_{q_C\in Q_C} \cut_{C, q_C}.
    \end{equation}
    This ensures that the variables in $\Cuts$ describe how the connected components in $G$ are sub-partitioned by a hypothetical solution.
    Next, we ensure that splitting the connected components in $G$ as described by $\Cuts$ produces the number of component types as stated by $\Comps$. For each $\comp\in \Comps$ we require
     \begin{align}
            \comp &= \sum_{\cut\in \Cuts} \cut \cdot \counter(\cut, \comp).\label{eq:comp_cuts}
      \end{align} 

      To ensure that from the components with the types in $\Comps$ we can build a fair clustering as described by $\Sols$ we require for each $\comp \in \Comps$ that
      \begin{equation}\label{eq:comp_sol}
            \comp = \sum_{\sol\in \Sols} \sol \cdot \counter(\sol, \comp).
      \end{equation} 
      Furthermore, for each $\sol\in \Sols$, let $\cost(\sol)$ denote the number of pairs of vertices in distinct components in $\sol$, for example, if $\sol$ consists of three connected components with $4,5,$ and $6$ vertices, respectively, then $\cost(\sol) = 4\cdot 5 + 4 \cdot 6 + 5 \cdot 6 = 74$. With this, we can ensure that the cost of the described clustering is within the budget by requiring
      \begin{align}
            B &\ge  \sum_{\cut\in \Cuts} \cut \cdot \cost(\cut)+ \sum_{\sol\in\Sols} \sol\cdot \cost(\sol). \label{eq:total_cost}
      \end{align} 
      Note that the ILP can be constructed (and solved) in time in $f(c+\inbound+\solbound)\cdot |V(G)|^{O(1)}$.

      It remains to argue that the ILP has a solution if and only if $G$ admits a fair clustering with cost at most $B$ such that every cluster in the solution has size at most $\solbound$. 
      Suppose there is such a clustering $\mathcal{P}$. Let the value of each $\comp\in\Comps, \cut\in\Cuts,$ and $\sol\in \Sols$ be as induced by $\mathcal{P}$ given the above interpretation of each variable.
      Note that this satisfies the constraints in Eqs.~(\ref{eq:comp_cuts})~and~(\ref{eq:comp_sol}) for all choices of $\comp\in \Comps$.
      For the constraint in Eq.~(\ref{eq:total_cost}), recall that the cost of partition $\mathcal{P}$ is the number of edges cut by $\mathcal{P}$ plus the total number of non-edges within clusters of $\mathcal{P}$.
      There are two categories of the latter: non-edges within connected components of $S$ and non-edges between distinct connected components of $S$ that share a cluster in $\mathcal{P}$. This latter cost is precisely $\cost(\sol)$ for every solution cluster of type $\sol$, while the other costs are entirely captured by $\sum_{\cut\in \Cuts} \cut \cdot \cost(\cut)$. Hence, Eq.~(\ref{eq:total_cost}) is satisfied.

      For the other direction, assume that the ILP has a solution. Cutting edges in $G$ as described by the variables in $\Cuts$ and reassembling as described by $\Sols$ yields a (not necessarily unique) fair clustering of $V(G)$, with no cluster larger than~$\solbound$. Eq.~(\ref{eq:total_cost}) implies that the cost of this clustering is at most $B$.
\qed\end{proof}
\fi

\thmtd*
\begin{proof}
      Consider an instance $(G, B, \chi)$ of $\textsc{FCC}$ such that $G$ has treedepth $\td$ and the fairlet size is $\tc$.
      We construct an equivalent instance $(G', B',\chi)$ such that, for a computable function $f$, all connected components in $G'$ are smaller than $f(\tilde{c},\td)$.
      As we ensure that $G'$ is a subgraph of $G$, we have $\tw(G') \le \td(G') \le \td(G)$ and thus, by Lemma~\ref{lem:MaximumClustersizeTreewidth}, it suffices to decide whether there is a solution with clusters of size at most $\solbound \coloneqq \max(24\td,\tilde{c})$.
      Such an instance $(G',B',\chi)$ can then be solved by invoking Lemma~\ref{lem:solve_for_small_components}, yielding fixed-parameter tractability w.r.t.\ $\tc+\td$.

      To construct $(G', B',\chi)$, consider a rooted forest $F$ over vertices $V(G)$ which witnesses the treedepth $\td$ of $G$ (i.e., the height of each tree in $F$ is at most $\td$ and all vertices connected by edges in $G$ have ancestor-descendant relationship in $F$). 
      For every vertex $v\in V(G)$, let its \emph{type} be recursively defined from leaves to roots by the color of $v$, the set of its ancestors in $F$ to which it is adjacent in $G$, and the multiset of types among its children.
      Perform the following procedure for every tree $T$ in $F$: Start at the bottom layer (the leaves). From this layer, select $\solbound$ vertices of each type (or all vertices with that type if it contains fewer than $\solbound$ vertices). 
      For each vertex $v$ that is not selected, remove all edges between $v$ and its ancestors in $T$ (in both $G$ and $F$) and update all types accordingly. For each edge removed this way, reduce $B$ by one.
      Repeat this procedure on $T$ until reaching the root layer. 
      If at any point the budget $B$ drops below $0$, reject the instance. This is correct as we remove an edge if we know that we cannot keep all neighbors of some vertex $v$ in the same cluster as $v$, making at least this number of cuts necessary.
      If the instance has not been rejected after all trees in the forest are processed, let $G'$ be the updated graph $G$ and $B'$ be the updated budget. 
      Observe that $G'$ and $B'$ can be constructed in \FPT-time w.r.t.\ $\tc+\td$.      

      We argue that $(G', B', \chi)$ is a \yesinstance if and only so is $(G,B,\chi)$.
      Note that $G$ and $G'$ are identical except that $G$ has $B-B'$ additional edges. 
      Suppose a fair clustering $\mathcal{P}$ of $V(G)$ witnesses $G'$ to be a \yesinstance. Then $\mathcal{P}$ is a fair clustering in $G$, and the cost of $\mathcal{P}$ in $G$ is at most its cost in $G'$ (i.e., $B'$) plus the number of edges in $E(G)\setminus E(G')$ that contribute to the cost (i.e., $B-B'$). Hence, the cost of $\mathcal{P}$ in $G$ is at most $B$, witnessing that $(G,B,\chi)$ is a \yesinstance.
      
      For the other direction, assume $(G, B, \chi)$ is a \yesinstance, then it is witnessed by a clustering $\mathcal{P}$ with clusters of size at most $\solbound$.
      Let $\widetilde{E} = E(G) \setminus E(G')$, and call all edges in $\widetilde{E}$ \emph{problematic} that connect vertices within the same cluster in $\mathcal{P}$.
      We show that there is a fair clustering $\mathcal{P}'$ of $V(G)$ without problematic edges that has at most the cost of $\mathcal{P}$ on $G$.
      Then, as each edge in $\widetilde{E}$ connects vertices of distinct clusters in $\mathcal{P}'$, the cost of $\mathcal{P}'$ on $G'$ is at most $B - |\widetilde{E}| = B'$. 
      We obtain $\mathcal{P}'$ by adapting $\mathcal{P}$ for each problematic edge by the following recursive procedure, which is started once with each vertex $r$ that is the root of some tree $T$ in $F$.
      Assume the procedure is currently visiting a vertex~$v$.
      If $v$ is not incident to any problematic edges, recurse on each child of $v$ in $T$. 
      Otherwise, before recursing to its children, process the problematic edges from top to bottom, that is, if $w$ has a higher layer than $w'$ then a problematic edge $\set{v,w}$ would be processed before $\set{v,w'}$. Ties are resolved arbitrarily. 
      To process a problematic edge $\set{v,w}$, note that the non-existence of $\set{v,w}$ in $G'$ implies that $v$ is adjacent to at least $\solbound$ vertices $u$ with the same type as $w$ in $G$ such that $\set{v,u}$ is not problematic. Pick any of these vertices $u$ such that $u$ does not yet share a cluster with $v$ in $\mathcal{P}$ and let $V_u$ and $V_w$ denote the sets of vertices in the subtrees of $T$ rooted at $u$ and $w$, respectively.
      Note that having the same type implies that $G[u]$ and $G[w]$ are isomorphic even with respect to vertex colors and their neighborhood to the ancestors of $v$. Hence, if in $\mathcal{P}$ we swap each vertex in $V_u$ with its counterpart in $V_w$, we obtain a fair partition with one fewer problematic edge, since $v$ and $w$ do not share a cluster anymore. 
      Even though some problematic edges might have been replaced by others, observe that none of these new problematic edges is incident to any vertex which we already processed.
      By performing the full recursion, we receive a fair partition $\mathcal{P'}$ with no problematic edges and cost at most $B - |\widetilde{E}| = B'$ in $G'$ as desired.

      It remains to bound the size of each connected component in $G'$.
      For any tree $T$ and layer $i$, let $t_i$ denote the number of types vertices in that layer after $T$ is processed.
      There are $\kappa2^{\td}$ types of leaves, so $t_0 \le \kappa 2^{\td}$ and, after the processing, there are at most $t_0\solbound$ vertices in this layer. 
      For each layer $i \in [\td]$, we have $t_i \le \kappa 2^{\td-i}\cdot (\solbound+1)^{t_{i-1}}$ types and at most $t_i \solbound$ vertices.
      By recalling that $\solbound \le \max(24\td, \tc)$ the size of~$T$ after the processing is upper bounded by a computable function in $\tc+\td$.
      The same argument applies to connected components which are created by being cut off from the main tree.
\qed\end{proof}


\section{Concluding Remarks}
Our results identify two routes to tractability for \textsc{Fair Correlation Clustering}: Either we may use Theorem~\ref{thm:vcn} to solve instances with large fairlets but a simple graph structure, or Theorems~\ref{thm:fpttw}-\ref{thm:td} for instances with a richer graph structure but small fairlets. While the fixed-parameter tractability of the problem w.r.t.\ treewidth on instances with fairlets of size $2$ (Theorem~\ref{thm:fpttw}) also settles an analogous question for \textsc{Correlation Clustering}, it simultaneously leads to the main open question arising from our work: is \textsc{Fair Correlation Clustering} fixed-parameter tractable w.r.t.\ treewidth plus the fairlet size?

\medskip
\noindent \textbf{Acknowledgments.} The authors acknowledge support from the Austrian Science Fund (Project 10.55776/Y1329).

\bibliographystyle{splncs04}
\bibliography{fair_clustering}

\end{document}

\end{document}